 \newcommand{\ketbra}[2]{\left|#1\middle\rangle\middle\langle#2\right|}
\newcommand{\de}[1]{\left( #1 \right)}
\newcommand{\De}[1]{\left[#1\right]}
\newcommand{\DE}[1]{\left\{#1\right\}}
\newcommand{\mean}[1]{\langle#1\rangle}
\newcommand{\ie}{\textit{i.e. }}
\def\id{\leavevmode\hbox{\small1\kern-3.8pt\normalsize1}}
\def\identity{\leavevmode\hbox{\small1\kern-3.8pt\normalsize1}}
\renewcommand{\epsilon}{\varepsilon}
\newtheorem{theorem}{Theorem}
\newtheorem{definition}{Definition} 
\def\ba#1\ea{\begin{align}#1\end{align}}
\def\ban#1\ean{\begin{align*}#1\end{align*}}
\newcommand{\be}{\begin{equation}}
\newcommand{\ee}{\end{equation}}
\def\norm#1{ {|\hspace{-.022in}|#1|\hspace{-.022in}|} }
\def\squareforqed{\hbox{\rlap{$\sqcap$}$\sqcup$}}
\def\qed{\ifmmode\squareforqed\else{\unskip\nobreak\hfil
\penalty50\hskip1em\null\nobreak\hfil\squareforqed
\parfillskip=0pt\finalhyphendemerits=0\endgraf}\fi}
\def\endenv{\ifmmode\;\else{\unskip\nobreak\hfil
\penalty50\hskip1em\null\nobreak\hfil\;
\parfillskip=0pt\finalhyphendemerits=0\endgraf}\fi}
\newcommand{\bra}[1]{\langle #1|}
\newcommand{\ket}[1]{|#1\rangle}
\newcommand{\braket}[2]{\langle #1|#2\rangle}
\newcommand{\<}{\langle}
\renewcommand{\>}{\rangle}
\newcommand{\I}{{\rm I}}
\def\id{{\operatorname{id}}}
\DeclareMathOperator{\rank}{rank}
\def\be{\begin{equation}}
\def\ee{\end{equation}}
\def\ben{\begin{eqnarray}}
\def\een{\end{eqnarray}}
\def\bei{\begin{itemize}}
\def\eei{\end{itemize}}
\mathchardef\ordinarycolon\mathcode`\:
\def\vcentcolon{\mathrel{\mathop\ordinarycolon}}
\newcommand{\nc}{\newcommand}
 \nc{\proj}[1]{|#1\rangle\!\langle #1 |} 
\nc{\avg}[1]{\langle#1\rangle}
\nc{\conv}{\operatorname{conv}}
\nc{\smfrac}[2]{\mbox{$\frac{#1}{#2}$}} \nc{\Tr}{\operatorname{Tr}}
\nc{\ox}{\otimes} \nc{\dg}{\dagger} \nc{\dn}{\downarrow}
\nc{\lmax}{\lambda_{\text{max}}}
\nc{\lmin}{\lambda_{\text{min}}}
\nc{\csupp}{{\operatorname{csupp}}}
\nc{\qsupp}{{\operatorname{qsupp}}} \nc{\var}{\operatorname{var}}
\nc{\rar}{\rightarrow} \nc{\lrar}{\longrightarrow}
\nc{\poly}{\operatorname{poly}}
\nc{\polylog}{\operatorname{polylog}} \nc{\Lip}{\operatorname{Lip}}
\nc{\Om}{\Omega}
\nc{\wt}[1]{\widetilde{#1}}
\def\>{\rangle}
\def\<{\langle}
\nc{\glneq}{{\raisebox{0.6ex}{$>$}  \hspace*{-1.8ex} \raisebox{-0.6ex}{$<$}}}
\nc{\gleq}{{\raisebox{0.6ex}{$\geq$}\hspace*{-1.8ex} \raisebox{-0.6ex}{$\leq$}}}
\nc{\vholder}[1]{\rule{0pt}{#1}}
\nc{\wh}[1]{\widehat{#1}}
\nc{\h}[1]{\widehat{#1}}
\nc{\ob}[1]{#1}
\def\beq{\begin {equation}}
\def\eeq{\end {equation}}
\def\be{\begin{equation}}
\def\ee{\end{equation}}
\nc{\eq}[1]{(\ref{eq:#1})} 
\nc{\eqs}[2]{\eq{#1} and \eq{#2}}
\nc{\eqn}[1]{Eq.~(\ref{eqn:#1})}
\nc{\eqns}[2]{Eqs.~(\ref{eqn:#1}) and (\ref{eqn:#2})}
\nc{\region}{\cS\cW}
\def\xor{{\sc{xor}}}
\def\bi{\mathcal{B}}
\newenvironment{protocol*}[1]
  {
    \begin{center}
      \hrulefill\\
      \textbf{#1}
  }
  {
    \vspace{-1\baselineskip}
    \hrulefill
    \end{center}
  }
\begin{document}
\title{Quantum bounds on  multiplayer linear games and device-independent witness of genuine tripartite entanglement}
 \author{Gláucia Murta}
 \email{glaucia@fisica.ufmg.br}
\affiliation{Departamento de Física, Universidade Federal de Minas Gerais, Caixa Postal 702, 30123-970, Belo Horizonte, MG, Brazil}
\author{Ravishankar \surname{Ramanathan}}
\affiliation{National Quantum Information Center of Gda\'nsk,  81-824 Sopot, Poland}
\affiliation{Institute of Theoretical Physics and Astrophysics, University of Gda\'{n}sk, 80-952 Gda\'{n}sk, Poland}
 \author{Natália Móller}
\affiliation{Departamento de Física, Universidade Federal de Minas Gerais, Caixa Postal 702, 30123-970, Belo Horizonte, MG, Brazil}
 \author{Marcelo Terra Cunha}
 \affiliation{Departamento de Matemática, Universidade Federal de Minas Gerais,
Caixa Postal 702, 30123-970, Belo Horizonte, MG, Brazil}
\affiliation{Departamento de Matemática Aplicada, Universidade Estadual de Campinas, 13083-970, Campinas, SP, Brazil}
\begin{abstract}
Here we study multiplayer linear games, a natural generalization of XOR games to multiple outcomes.
We generalize a recently proposed efficiently computable bound, in terms of the norm of a game matrix, on the quantum value of 2-player games to 
linear games with $n$ players.
As an example, we bound the quantum value of a generalization of the well-known CHSH game to $n$ players and $d$ outcomes. 
We also {apply the bound to} show in a simple manner that any nontrivial functional box, that could lead to trivialization of communication 
complexity {in a multiparty scenario}, cannot be realized in quantum mechanics. 
We then present a systematic method to derive device-independent witnesses of genuine tripartite entanglement.
\end{abstract}

\maketitle

\section{Introduction}
A nonlocal game is a cooperative task where the players receive questions from a referee and have to give answers
in order to satisfy some previously defined winning condition \cite{NLgames}. 
After the game starts the players are not allowed to communicate, so all they can
do is to previously agree on a joint strategy.
In general the possibility of using a quantum strategy, \ie when the players share an entangled quantum state and perform local measurements
on it, leads to a better performance in the game compared to purely classical strategies
{\ie those that can be described under the paradigm of local hidden variables  {involving the use of shared randomness as a resource}} \cite{Bell}.

Nonlocal games have a wide range of applications. 
They play an important role in the study of communication complexity \cite{BCMdW10,BruknerCommComplexPhys}  (and vice-versa) and in the 
{formulation of device-independent cryptographic protocols \cite{VV14, CR12}}. 
Nonlocal games also constitute a natural framework for studying quantum nonlocality, stressing  the tasks
where quantum resources outperform their classical alternatives.

An important class of nonlocal games comes from the most commonly studied Bell inequalities, namely the correlation Bell
inequalities for binary outcomes otherwise known as XOR games. This is in fact a subclass of the set of \emph{bipartite linear}
games \cite{BavarianShor, xord, Hastad}, defined for an arbitrary number of outputs.
In a linear game \cite{BavarianShor, xord, Hastad}, the parties output answers that are elements of a finite Abelian group and the winning 
constraint depends upon the group operation acting on the outputs. Linear games are the paradigmatic example of nonlocal games with 
two or more outcomes, and a study of their classical and quantum values is crucial, especially in light of applications such as 
the cryptographic primitive task of Bit Commitment \cite{BCJed}.

{For bipartite XOR games,} Tsirelson's theorem \cite{Tsirelson} 
guarantees that the best performance of quantum players can be calculated exactly and efficiently using a
semidefinite program \cite{StephanieSDP,NLgames}.  {The study of XOR games was in part driven by 
the fact that many of the quantum information-processing protocols were developed for qubits, for which binary outcome
games appear naturally. 
Bounds in {\xor} games using game matrices were studied in \cite{xorEpping,xorRavi}.
Recently, there has been much interest in developing applications of higher-dimensional 
entanglement \cite{Exp-high-dim, Qudit-Toffoli, Qudit-randomness, Qudit-key-dist} for which Bell inequalities with more than
two outcomes are naturally suited. Therefore, both for fundamental reasons as well as for these applications, the study of Bell
inequalities with more outcomes is crucial}.

Despite its importance, very few results are known for games with more outputs and/or more parties.
A general result concerning hardness of calculating a tight bound for the probability of success of quantum strategies for
such games was derived in Ref.~\cite{Kempe08}. 
For the special case of \emph{unique} games,  a method of approximating this quantum bound is known \cite{Kempe10},
however the approximation is good only when the quantum probability of success is close to one. 
Recently \cite{xord}, an  {efficiently computable bound for bipartite linear games was derived in terms of the norm of certain game matrices.
These bounds are suitable  {also when} the quantum value is far from unity and they lead to the 
derivation of many interesting results like a generalization of the principle
of no-advantage for nonlocal computation \cite{NLC} to  functions with $d > 2$ outcomes.

As important as the contrast between the performance of classical and quantum players is the fact 
that in some cases quantum strategies cannot  achieve the {limiting value imposed} by the no-signaling principle. 
For example, for the  {class of} linear games there always exists a  no-signaling strategy that wins the game with certainty and this is not always the case for quantum strategies.
The best known case is the CHSH game \cite{CHSH}, with two players and two binary questions per player.
 {In this game,} classical strategies give a maximal probability of success of $\frac34$, quantum strategies reach $\frac{2+\sqrt{2}}{4}$, 
but the so called Popescu-Rohrlich boxes \cite{prbox} are no-signaling devices which allow success with probability $1$.
A celebrated result is that such hypothetical boxes would lead to trivialization of communication complexity \cite{vanDam}.
This result was generalized for the so called \emph{functional boxes}, which can be understood as perfect no-signalling
strategies for some linear games with classical bounds strictly smaller than one \cite{PRd}.


Here we study the performance of quantum strategies in multiplayer linear games.
We generalize the methods of Ref. \cite{xord} and
present an {efficiently computable} bound to the quantum value of $n$-player linear games, and we explore {several interesting} 
applications where the 
bounds lead to nontrivial results.
The text is structured as {follows}: in Section \ref{pre} we introduce the main concepts and definitions
that will be used {in} the text. Section \ref{bound} contains our main result: an {efficiently computable} bound to the quantum 
value of $n$-player linear games. Sections \ref{chsh} and \ref{nofunctbox} present applications of our bound.  
{As a paradigmatic example}, we derive an upper bound to the 
quantum value of a $n$-player generalization of the CHSH-$d$ game. We  {also} show
that uniformly distributed functional boxes  {(that trivialize multiparty communication complexity)} 
cannot be realized in quantum theory. In Section \ref{diew} 
we present a systematic method to design device-independent witnesses of
genuine tripartite entanglement for $d$-dimensional 
systems.   As an example of the method, in Section \ref{example}, we analyze a generalization of  {the}
Mermin GHZ paradox {\cite{Mermin}}, showing that one can detect genuine 
tripartite entanglement in a noisy
GHZ state of local dimension $3$ using only $9$  {expectation} values. Finally in Section \ref{conclu} we discuss our results and 
present future directions.
The more technical proofs of the results presented here will be  {deferred} to the Supplemental Material \cite{supmat}. 

\section{Preliminaries}\label{pre}
In this section we introduce some concepts and definitions that will be used in the remaining text.

\begin{definition}
A $n$-player nonlocal game $g_n(V,p)$ is a cooperative task 
where $n$ players $A_1,\ldots,A_n$,  {who} are not allowed to communicate after 
the game starts, receive respectively questions $x_1,\ldots,x_n$, where $x_i \in [Q_i]$  {($[X]$ denotes
a set with $X$ elements)}, chosen
from a probability distribution $p(x_1,\ldots,x_n)$ by a referee. 
 {Upon} receiving question $x_i$ player $A_i$ is supposed to answer $a_i \in [O_i]$. 
 The winning condition of the game is defined by a predicate function $V(a_1,\ldots,a_n|x_1,\ldots,x_n)$ which
 assumes value $1$ to indicate when the players win and value $0$ to indicate when they lose. 
 
The probability of success of the players for a particular strategy is given by
 \begin{align}
  \omega(g_n)= \sum_{\vec{a}\in \vec{O},\vec{x}\in \vec{Q}} p(\vec{x}) V(\vec{a}|\vec{x}) P(\vec{a}|\vec{x})
 \end{align}
 where $\vec{x}=(x_1,\ldots,x_n)$ denotes the input string, $\vec{Q}=[Q_1]\times \ldots \times [Q_n]$  and analogously for the other vectors. 
\end{definition}

The classical (quantum) value of the game is the maximum probability of success optimizing over all possible classical (quantum) strategies. We will denote the 
classical value by $\omega_c(g_n)$ and the quantum value by $\omega_q(g_n)$.
Another important quantity is the no-signaling value of the game, $\omega_{NS}(g_n)$, which is defined in an analogous way with the players being allowed to
apply any no-signaling strategy.

\begin{definition}[Classical value of a nonlocal game]
 The classical value of a nonlocal game, $\omega_c(g_n)$, is the maximum probability with which the players can win the game when they are 
 restricted to classical strategies. The classical value of the game is always obtained by a deterministic strategy 
 \begin{align}
\omega_c(g_n)=\max_{{\DE{D(a_i|x_i)}}}\sum_{\vec{a},\vec{x}} p(\vec{x}) V(\vec{a}|\vec{x}) D(a_1|x_1)\ldots D(a_n|x_n)
 \end{align}
where $\DE{D(a_i|x_i)}$ represents a deterministic probability distribution.
\end{definition}

\begin{definition}[Quantum value of a nonlocal game]
 The quantum value of a nonlocal game, $\omega_q(g_n)$, is the maximum probability of success when the players can apply quantum 
 strategies. A general quantum strategy can be described by the players sharing an $n$-partite pure state $\ket{\psi}$ of arbitrary dimension and
 performing local measurements $\DE{M_{x_i}^{a_i}}$ on it. The quantum value of the game can be written as
  \begin{align}
\omega_q(g_n)=\sup_{\ket{\psi},\{M_{x_i}^{a_i}\}}\sum_{\vec{a},\vec{x}} p(\vec{x}) V(\vec{a}|\vec{x}) \bra{\psi}M_{x_1}^{a_1} \otimes \ldots \otimes M_{x_n}^{a_n}\ket{\psi}.
 \end{align}
\end{definition}

Note that in order to calculate the quantum value one has  {to optimize} over quantum states, measurements and also over the dimension of these 
operators,  {so that} in principle it is not even known if such a quantity can be evaluated.

Bipartite linear games are a particular class of nonlocal games where the outputs $a,b$ are elements of an Abelian group and 
the winning condition of the game is defined by $a+b=f(x,y)$ (\ie $V(a,b|x,y)=1$ if $a+b=f(x,y)$ and $V(a,b|x,y)=0$ otherwise) where $+$ is the associated group operation. Here we study linear games
with $n$-players which are defined as follows:

\begin{definition}
A $n$-player linear game $g_n^{\ell}(G,f,p)$ is a 
nonlocal game where the players answer with elements of an Abelian group 
$a_1,\ldots,a_n \in G$, where $(G,+)$ is an Abelian group with associated operation $+$, and the predicate function $V$ only 
depends on the sum of the players outputs:
\begin{align}
V(\vec{a}|\vec{x})=\begin{cases}
                    1 \;,\;\text{if}\; a_1 + \ldots + a_n=f(x_1,\ldots,x_n)\\
                    0 \;,\; \text{otherwise}
                        \end{cases}
\end{align}
and then the probability of success of a particular strategy is given by
 \begin{align}\label{psuccess}
 \omega(g_n^{\ell})=\sum_{\vec{x}\in \vec{Q}}p(\vec{x})P(a_1+ \ldots + a_n=f(\vec{x})|\vec{x}),
\end{align}
with $\omega_c(g_n^{\ell}), \omega_q(g_n^{\ell})$, and $\omega_{NS}(g_n^{\ell})$ as defined before.
\end{definition}

\section{An efficiently computable bound on the quantum value of multiplayer linear games}\label{bound}

For  {the} sake of clarity, we will sometimes restrict the presentation of the results to 3-player games. However most of the results are straightforwardly
generalized to $n$ players and we will state it whenever  {this} is the case. 

The success probability of a general strategy for a 3-player linear game is given by
\begin{align}\label{wP}
  \omega(g_3^{\ell}) = \sum_{x,y,z} {p(x,y,z)} P(a+b+c=f(x,y,z)|x,y,z).
 \end{align}
By making use of the Fourier transform of Abelian groups \cite{Terras}, we can rewrite the success probability as
\begin{widetext}
\begin{align}\label{wcor}
  \omega(g_3^{\ell}) = \frac{1}{|G|}\de{1+\sum_{x,y,z}\sum_{k \in G \setminus \DE{e}} {p(x,y,z)} \chi_k(f(x,y,z))\mean{A_x^kB_y^kC_z^k}},
 \end{align}
 \end{widetext}
 where $\mean{A_x^kB_y^kC_z^k}$ are \textit{generalized correlators} defined as the 
Fourier transform of the probabilities
 \begin{align}
  \mean{A_x^iB_y^jC_z^k}=\sum_{a,b,c \in G} \bar{\chi}_i(a)\bar{\chi}_j(b)\bar{\chi}_k(c)P(a,b,c|x,y,z),
 \end{align}
 and $\chi_i$ are the characters of the Abelian group associated with the game. 
 {These} are complex numbers satisfying natural relations: reflexivity $\bar{\chi}_i(a)=\chi_i(-a)$, orthogonality
$\displaystyle{\sum_{a \in G}\chi_i(a)\bar{\chi}_j(a)=|G|\,\delta_{i,j}}$, and homomorphism $\chi_i(a)\chi_i(b)=\chi_i(a+b) \; \; \forall a,b \in G$.

Quantum strategies can be described by projective measurements being performed on pure quantum states in a Hilbert space of arbitrary dimension. 
Consider that a particular strategy is given by the 
set of projective measurements $\{M_x^a\}$, $\{M_y^b\}$, $\{M_z^c\}$ performed  {on} the tripartite quantum state $\ket{\psi}$. In  {this}
case we have the association
\begin{align}
 \mean{A_x^iB_y^jC_z^k}=\bra{\psi}A_x^i\otimes B_y^j \otimes C_z^k \ket{\psi},
\end{align}
where the usually non-Hermitian operators $A_x^i$ are defined as
\begin{align}
 A_x^i=\sum_a\bar{\chi}_i(a)M_x^a,
\end{align}
and analogously for $B_y^j$ and $C_z^k$.

 Motivated by expression \eqref{wcor} we can associate to the game  {$g_3^{\ell}$} a set of $|G|$ matrices  {$\Phi_k$ for $k \in G$},
 which carry information about
 the probability distribution with which the referee picks questions and also the winning condition:
 \begin{align}\label{phi3}
 \Phi_k=\sum_{(x,y,z) \in \vec{Q}}p(x,y,z)\chi_k(f(x,y,z))\ketbra{x}{yz}.
\end{align}

Now we are ready to state an upper bound on the quantum value of tripartite linear games.
\begin{theorem}\label{thmnorm3}
 The quantum value of a tripartite linear game, $g_3^{\ell}(G,f,p)$, where players $A, B$, and $C$ receive questions 
 $(x,y,z)\in [Q_1]\times [Q_2] \times [Q_3]$ and answer with elements of an
 Abelian group $(G,+)$, is upper bounded by
 \begin{align}\label{norm3}
  \omega_q(g_3^{\ell}) \leq \frac{1}{|G|}\de{1+\sqrt{Q_1Q_2Q_3}\sum_{k \in G \setminus \DE{e}}\norm{\Phi_k}},
 \end{align}
where  {$\norm{\cdot}$} denotes the maximum singular value of the matrix, $e$ is the identity element of the group G, and
$ \Phi_k$ are the game matrices.
\end{theorem}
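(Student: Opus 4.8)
The plan is to start from the Fourier-analytic expression \eqref{wcor} for the success probability, bound each of the $|G|-1$ terms indexed by $k\in G\setminus\DE{e}$ separately, and reduce the tripartite estimate to a bipartite one by merging two of the three parties. The first fact I would record is that every generalized correlator operator is \emph{unitary}: since $A_x^k=\sum_a\bar\chi_k(a)M_x^a$ is a sum of orthogonal projectors weighted by the unit-modulus numbers $\bar\chi_k(a)$, one has $(A_x^k)^\dagger A_x^k=\sum_a|\chi_k(a)|^2M_x^a=\sum_a M_x^a=\id$, and likewise for $B_y^k$ and $C_z^k$. This unitarity is exactly what makes the norm bookkeeping collapse to the counting factors $Q_1,Q_2,Q_3$ at the end.

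Next, for a fixed $k\neq e$ and a fixed quantum strategy $(\ket{\psi},\DE{M})$, I would rewrite the correlator $\mean{A_x^kB_y^kC_z^k}=\bra{\psi}A_x^k\ox B_y^k\ox C_z^k\ket{\psi}$ as an inner product of two vectors, grouping parties $B$ and $C$ together. Setting
\[
\ket{a_x}=\big((A_x^k)^\dagger\ox\id\ox\id\big)\ket{\psi},\qquad \ket{b_{yz}}=\big(\id\ox B_y^k\ox C_z^k\big)\ket{\psi},
\]
one has $\braket{a_x}{b_{yz}}=\mean{A_x^kB_y^kC_z^k}$. Collecting these into the block vectors $\ket{\mathcal A}=\sum_x\ket{x}\ox\ket{a_x}\in\mathbb C^{Q_1}\ox\hcal$ and $\ket{\mathcal B}=\sum_{y,z}\ket{yz}\ox\ket{b_{yz}}\in\mathbb C^{Q_2Q_3}\ox\hcal$, the term indexed by $k$ becomes exactly $\bra{\mathcal A}(\Phi_k\ox\id)\ket{\mathcal B}$, because the matrix element $(\Phi_k)_{x,yz}=p(x,y,z)\chi_k(f(x,y,z))$ reproduces the coefficient in \eqref{wcor}.

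The estimate then follows from the operator-norm (Cauchy--Schwarz) bound $|\bra{\mathcal A}(\Phi_k\ox\id)\ket{\mathcal B}|\le\norm{\Phi_k\ox\id}\,\norm{\mathcal A}\,\norm{\mathcal B}=\norm{\Phi_k}\,\norm{\mathcal A}\,\norm{\mathcal B}$, using that tensoring with the identity leaves the operator norm unchanged. Unitarity gives $\norm{\mathcal A}^2=\sum_x\bra{\psi}\big(A_x^k(A_x^k)^\dagger\big)\ox\id\ox\id\ket{\psi}=Q_1$ and $\norm{\mathcal B}^2=\sum_{y,z}\bra{\psi}\id\ox\big((B_y^k)^\dagger B_y^k\big)\ox\big((C_z^k)^\dagger C_z^k\big)\ket{\psi}=Q_2Q_3$, so each term is bounded in modulus by $\sqrt{Q_1Q_2Q_3}\,\norm{\Phi_k}$.

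Finally, since $\omega(g_3^{\ell})$ is real, the sum $S=\sum_{x,y,z}\sum_{k\neq e}p(x,y,z)\chi_k(f(x,y,z))\mean{A_x^kB_y^kC_z^k}$ appearing in \eqref{wcor} is real, so $\omega(g_3^{\ell})\le\tfrac1{|G|}(1+|S|)$; the triangle inequality over $k$ together with the per-term bound yields $|S|\le\sqrt{Q_1Q_2Q_3}\sum_{k\neq e}\norm{\Phi_k}$. As this holds for every strategy, taking the supremum over $\ket{\psi}$ and $\DE{M}$ gives \eqref{norm3}. The step I expect to require the most care is the grouping move: treating the tripartite game as a bipartite game between $A$ and the composite party $(B,C)$ is what forces the column index of $\Phi_k$ to range over the $Q_2Q_3$ pairs and produces the factor $\sqrt{Q_2Q_3}$ rather than, say, $\sqrt{Q_2}+\sqrt{Q_3}$; one must verify that the merged operators $B_y^k\ox C_z^k$ remain unitary (they do, being a tensor product of unitaries) for $\norm{\mathcal B}^2$ to evaluate cleanly to $Q_2Q_3$. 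The same argument, merging $n-1$ parties, extends the bound to $n$ players.
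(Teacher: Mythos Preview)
Your proposal is correct and follows essentially the same approach as the paper's proof: both rewrite the $k$-th term of the Fourier expansion \eqref{wcor} as $\bra{\alpha}(\Phi_k\ox\id)\ket{\beta}$ for suitably assembled block vectors over the input labels, invoke unitarity of the Fourier-transformed observables to get $\norm{\alpha}^2=Q_1$ and $\norm{\beta}^2=Q_2Q_3$, and then apply the operator-norm bound. Your version is slightly more explicit in two places---the verification that $A_x^k$ is unitary, and the observation that $S$ is real so that $1+S\le 1+|S|$ before applying the triangle inequality over $k$---but the structure and the key grouping idea $A\,|\,BC$ are identical to the paper's.
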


The bound can be generalized for $n$-player games in the following way
\begin{theorem}\label{thmnormn}
 Consider an $n$-player linear game,  $g_n^{\ell}(G,f,p)$. Let $S$ be a subset
 of the parties, $S \subset \DE{A_1,...,A_n}$.
 The quantum value of an $n$-player linear game, $g_n^{\ell}(G,f,p)$, is upper bounded by
 \begin{align}\label{normmulti}
  \omega_q(g_n^{\ell}) \leq \min_{S} \frac{1}{|G|}\de{1+\sqrt{Q_1\ldots Q_n}\sum_{k \in G\setminus \DE{e}}\norm{\Phi^S_k}},
 \end{align}
where $\norm{\Phi^S_k}$ denotes the maximum singular value of matrix $\Phi^S_k$, and the game matrices are defined as 
\begin{align}
 \Phi^S_k=\sum_{{\vec{x} \in \vec{Q}_S, \vec{y} \in \vec{Q}_{S^c}}}p(\vec{x},\vec{y})\chi_k(f(\vec{x},\vec{y}))\ketbra{\vec{x}}{\vec{y}}.
\end{align}
$\vec{x} \in {\vec{Q}_S}$ denotes the vector of inputs to the players that belong to set $S$, and $S^c$ is the complement of $S$.
\end{theorem}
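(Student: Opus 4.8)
The plan is to reduce the $n$-player problem to an effective \emph{bipartite} one: group the players in $S$ into a single party and those in $S^c$ into another, and then, character by character, apply a Cauchy--Schwarz estimate controlled by the operator norm of the game matrix $\Phi^S_k$. Because the argument goes through for every bipartition $S\,|\,S^c$ while the prefactor $\sqrt{Q_1\cdots Q_n}$ turns out to be independent of the choice of $S$, one may finally optimize over $S$ to obtain the minimum in \eqref{normmulti}. This is the same strategy as in the tripartite case (Theorem~\ref{thmnorm3}, which is the special case $n=3$, $S=\{A\}$).

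First I would Fourier-expand the success probability exactly as in \eqref{wcor}, now for $n$ players. Writing the winning indicator through character orthogonality as $\frac{1}{|G|}\sum_{k\in G}\chi_k\!\big(\sum_i a_i - f(\vec{x})\big)$ and using the homomorphism and reflexivity relations, the $k=e$ term contributes $1$ (since $\sum_{\vec{x}}p(\vec{x})=1$ and $A_{x_i}^e=I$), giving
\[
\omega_q(g_n^{\ell})=\frac{1}{|G|}\Big(1+\!\!\sum_{k\in G\setminus\{e\}}\sum_{\vec{x}}p(\vec{x})\chi_k(f(\vec{x}))\,\bra{\psi}\textstyle\bigotimes_i (A_i)_{x_i}^k\ket{\psi}\Big).
\]
It then suffices to bound, for each fixed $k\neq e$, the inner sum by $\sqrt{Q_1\cdots Q_n}\,\norm{\Phi^S_k}$.

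Fix $k$ and the partition, and split the tensor factors according to $S$ and $S^c$. Writing $\ket{a_{\vec{x}}}=\big(\bigotimes_{i\in S}(A_i)_{x_i}^k\big)^{\dagger}\ket{\psi}$ (identity understood on $S^c$) and $\ket{d_{\vec{y}}}=\big(\bigotimes_{i\in S^c}(A_i)_{y_i}^k\big)\ket{\psi}$ (identity on $S$), one has $\bra{\psi}\bigotimes_i (A_i)_{x_i}^k\ket{\psi}=\braket{a_{\vec{x}}}{d_{\vec{y}}}$ with $\vec{x}\in\vec{Q}_S$, $\vec{y}\in\vec{Q}_{S^c}$. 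Stacking these into $\ket{A}=\sum_{\vec{x}}\ket{\vec{x}}\otimes\ket{a_{\vec{x}}}$ and $\ket{D}=\sum_{\vec{y}}\ket{\vec{y}}\otimes\ket{d_{\vec{y}}}$, a direct computation identifies the inner sum with $\bra{A}(\Phi^S_k\otimes I)\ket{D}$, which Cauchy--Schwarz bounds in modulus by $\norm{\Phi^S_k\otimes I}\,\norm{\ket{A}}\,\norm{\ket{D}}=\norm{\Phi^S_k}\,\norm{\ket{A}}\,\norm{\ket{D}}$, the last equality because tensoring with the identity does not change the largest singular value.

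The decisive input is that each $A_x^k=\sum_a\bar\chi_k(a)M_x^a$ is \emph{unitary}: since the $M_x^a$ are orthogonal projectors with $\sum_a M_x^a=I$ and $|\chi_k(a)|=1$, one gets $A_x^k(A_x^k)^{\dagger}=\sum_a M_x^a=I$. A tensor product of unitaries is again unitary, so every $\ket{a_{\vec{x}}}$ has unit norm; the summands of $\ket{A}$ are mutually orthogonal (distinct $\ket{\vec{x}}$), so $\norm{\ket{A}}^2=\prod_{i\in S}Q_i$, and likewise $\norm{\ket{D}}^2=\prod_{i\in S^c}Q_i$. Their product is $\sqrt{Q_1\cdots Q_n}$ for \emph{any} bipartition, which is exactly why the prefactor is $S$-independent while $\norm{\Phi^S_k}$ is not. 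Summing the per-character estimates over $k\neq e$ gives the bound for a fixed $S$, and minimizing over $S$ yields \eqref{normmulti}. The step needing the most care is this bipartite reduction---checking that grouping the players of $S$ (resp.\ $S^c$) really produces a single effective party whose Fourier operator is the claimed tensor product, so that unitarity is preserved and the stacked-vector norms collapse to $\sqrt{\prod_{i\in S}Q_i}$; the remaining ingredients (the Fourier expansion, the Cauchy--Schwarz step, and $\norm{\Phi^S_k\otimes I}=\norm{\Phi^S_k}$) are routine once the bipartite core of \cite{xord} is in hand.
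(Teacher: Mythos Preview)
Your proposal is correct and follows essentially the same approach as the paper's proof: Fourier-expand the success probability, group the parties into $S$ and $S^c$, pack the correlators into two stacked vectors over the input labels, and bound the resulting bilinear form by the operator norm of $\Phi^S_k$ using the unitarity of the Fourier operators $A_x^k$. The only cosmetic difference is that the paper normalizes the stacked vectors $\ket{\alpha^k},\ket{\beta^k}$ up front (pulling out the factor $\sqrt{Q_1\cdots Q_n}$ immediately), whereas you leave $\ket{A},\ket{D}$ unnormalized and compute $\norm{\ket{A}}\,\norm{\ket{D}}=\sqrt{Q_1\cdots Q_n}$ afterwards; your explicit placement of the dagger in $\ket{a_{\vec{x}}}$ is in fact slightly cleaner than the paper's write-up.
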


In Theorem \ref{thmnormn} each partition $S$ of the set of parties provides an upper bound to the quantum value, the minimum in Eq. \eqref{normmulti}
selects the most restrictive one. 
Note that for the 3-player game we can also chose to write the game matrices with $S=\DE{B}$ or $S=\DE{C}$, which can lead to tighter bounds than the 
one derived from Eq. \eqref{phi3}. 
Proofs of Theorems \ref{thmnorm3} and \ref{thmnormn} {can be} found in the Supplemental Material \cite{supmat}.

A particular class of linear games are the \xor-$d$ games, $g_n^{\oplus}$, where the outputs belong to the group $\mathbb{Z}_d$ 
with associated operation $\oplus_d$ (sum modulo $d$). The probability of success is then given by
\begin{align}\label{psuccessxor}
 \omega(g_n^{\oplus})=\sum_{\vec{x}\in \vec{Q}}p(\vec{x})P(a_1\oplus_d \ldots \oplus_d a_n=f(\vec{x})|\vec{x}).
\end{align}

In the case of a tripartite \xor-$d$ game with $m$ possible inputs per player, the expression \eqref{norm3} reduces to
 \begin{align}\label{xornorm}
  \omega_q(g_3^{\oplus}) \leq \frac{1}{d}\de{1+\sqrt{m^3}\sum_{k=1}^{d-1}\norm{\Phi_k}},
 \end{align}
where $\norm{\Phi_k}$ denotes the maximum singular value of matrix $\Phi_k$, and
\begin{align}
 \Phi_k=\sum_{x,y,z=1}^{m} p(x,y,z) \zeta^{k.f(x,y,z)}\ketbra{x}{yz}
\end{align}
where $\zeta=e^{2\pi i/d}$ {is a $d$-th root of unity}.

\section{$n$-party CHSH-$d$ game}\label{chsh}
The generalization of the CHSH game for $d$ outputs, where $d$ is a prime or power of prime, was considered in Refs. \cite{chshd,Jichshd,Liangchshd}.
Recently a bound for the quantum value of the CHSH-$d$ game was derived in Ref. \cite{BavarianShor}, using information theoretic arguments,
and re-derived in a simple manner in Ref. \cite{xord} using the bounds  {based on norms of game matrices} for the bipartite case.

Here we consider a generalization of the CHSH-$d$ game for $n$-players based on an expression first considered by Svetlichny \cite{Svetlichny} in the context of detecting genuine multipartite nonlocality. 

\begin{definition} The CHSH${_n}$-$d$ game, for $d$ prime or a power of prime, is a
linear game with the winning condition  {given as}
\begin{align}
 a_1 + \ldots + a_n= \sum_{i<j} x_i \cdot x_j
 \end{align}
where addition and multiplication are operations defined over the  {finite} field $\mathbb{F}_d$.
\end{definition}
An explanation of the operations defining the game together with an example for the case of ternary inputs and outputs
is provided in the Supplemental Material \cite{supmat}.

Using Theorem \ref{thmnormn}, we derive upper bounds for the performance of quantum players in the CHSH${_n}$-$d$ game. 
\begin{theorem}\label{chshd}
 The quantum value of the CHSH${_n}$-$d$ game, for $d$ a prime or a power of a prime, obeys
 \begin{align}\label{boundchsh}
\omega_q(\text{CHSH$_n$-d})\leq \frac{1}{d}+\frac{d-1}{d\sqrt{d}}  .
 \end{align}
\end{theorem}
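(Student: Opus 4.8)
The plan is to apply Theorem \ref{thmnormn} with the single-party bipartition $S=\DE{A_1}$ (by the permutation symmetry of $f$ every single-party choice gives the same value, and since each $S$ already yields a valid upper bound, one such $S$ suffices). First I would fix the natural uniform question distribution $p(\vec{x})=d^{-n}$, so that $Q_1=\ldots=Q_n=d$ and the game matrix has rows labelled by $x_1\in\mathbb{F}_d$ and columns by $\vec{y}=(x_2,\ldots,x_n)\in\mathbb{F}_d^{\,n-1}$, with entries $(\Phi^S_k)_{x_1,\vec{y}}=\tfrac{1}{d^n}\chi_k(f(\vec{x}))$ where $f(\vec{x})=\sum_{i<j}x_ix_j$. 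Rather than bounding $\norm{\Phi^S_k}$ directly, I would compute the Gram matrix $\Phi^S_k(\Phi^S_k)^\dagger$, whose $(x_1,x_1')$ entry is $d^{-2n}\sum_{\vec{y}}\chi_k\!\de{f(x_1,\vec{y})-f(x_1',\vec{y})}$.

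The key algebraic simplification is that in $f$ only the terms $x_1\sum_{j\geq 2}x_j$ involve $x_1$, so the exponent difference collapses to $(x_1-x_1')\,\sigma$ with $\sigma=\sum_{j=2}^n x_j\in\mathbb{F}_d$. I would then evaluate $\sum_{\vec{y}}$ by first summing over the $d^{\,n-2}$ tuples $\vec{y}$ with a fixed value of $\sigma$ and then over $\sigma\in\mathbb{F}_d$. For $x_1=x_1'$ this gives $d^{\,n-1}$; for $x_1\neq x_1'$ the coefficient $c=x_1-x_1'$ is a nonzero field element, so $\sigma\mapsto c\sigma$ permutes $\mathbb{F}_d$ and character orthogonality $\sum_{t\in\mathbb{F}_d}\chi_k(t)=0$ (valid precisely because $k\neq e$) forces the sum to vanish. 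Hence $\Phi^S_k(\Phi^S_k)^\dagger=d^{-(n+1)}\,\identity$, all nonzero singular values coincide, and $\norm{\Phi^S_k}=d^{-(n+1)/2}$ for each of the $d-1$ nontrivial characters. Substituting $\sqrt{Q_1\ldots Q_n}=d^{n/2}$ and $\sum_{k\neq e}\norm{\Phi^S_k}=(d-1)d^{-(n+1)/2}$ into Eq. \eqref{normmulti} gives $\frac{1}{d}\de{1+d^{n/2}(d-1)d^{-(n+1)/2}}=\frac{1}{d}+\frac{d-1}{d\sqrt{d}}$, as claimed.

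The step I expect to be the crux is the collapse of the Gram matrix to a multiple of the identity, i.e. showing that distinct rows of $\Phi^S_k$ are exactly orthogonal. This rests on two facts that must both be invoked with care: that the off-diagonal exponent factors as $(x_1-x_1')\,\sigma$ with a nonzero field coefficient, and that multiplication by that coefficient is a bijection of $\mathbb{F}_d$ — which is exactly where the hypothesis that $d$ is a prime or prime power (so that $\mathbb{F}_d$ is a field, and every nonzero element is invertible) becomes indispensable. For general prime powers one should write the characters as $\chi_k(a)=e^{2\pi i\,\Tr(ka)/p}$ using the field trace $\Tr\colon\mathbb{F}_d\to\mathbb{F}_p$, but additivity of $\Tr$ preserves the homomorphism and orthogonality relations and the bijection argument is unchanged, so the same bound holds verbatim.
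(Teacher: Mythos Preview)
Your proposal is correct and follows essentially the same approach as the paper: choose the bipartition $S=\{A_1\}$, compute $\Phi_k^S(\Phi_k^S)^\dagger$ and show it equals $d^{-(n+1)}\identity$ via character orthogonality, then substitute the resulting norm into Theorem~\ref{thmnormn}. Your write-up is in fact more explicit than the paper's (spelling out the collapse of the exponent to $(x_1-x_1')\sigma$, the counting of $\vec{y}$ with fixed $\sigma$, and the field-bijection step that uses the prime-power hypothesis), but the underlying argument is identical.
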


\begin{proof}[Sketch of the proof]
 The proof follows from direct calculation of $\Phi_k\Phi_k^{\dag}$ with the partition $S=\DE{A_1}$ and using the character relations. We obtain
 \begin{align}
  \Phi_k\Phi_k^{\dag}=\frac{1}{d^{n+1}}\I_d
 \end{align}
which implies $\norm{\Phi_k}=\frac{1}{\sqrt{d^{n+1}}}$. Then by applying Theorem \ref{thmnormn} we obtain
the bound.
The detailed proof can be found in the Supplemental Material \cite{supmat}.
\end{proof}

Interestingly the bounds are independent of the number of parties showing that by increasing the number of
players the performance is still limited.
Analysis of Ref. \cite{Liangchshd} indicates that the bound \eqref{boundchsh} is not tight and for the bipartite case it may
correspond to the value obtained for the first level of the SDP hierarchy introduced in Ref. \cite{NPA}.


\section{No quantum realization of functional nonlocal boxes} \label{nofunctbox}
In Ref. \cite{vanDam} it was shown that the possibility of existence of strong correlations known as PR-boxes \cite{prbox} 
would lead to the trivialization of communication complexity, since by sharing sufficient number of PR-boxes, Alice and Bob would
be able to compute any distributed Boolean function with only one bit of communication. 
{Therefore, the conviction that communication complexity is not trivial (\ie, there are some communication complexity problems that are hard) is viewed as 
a partial characterization of the nonlocal correlations that can be obtained by local measurements on entangled quantum particles.}

Later this result was generalized \cite{PRd} to functional boxes, \ie  a generalization of PR-boxes for  $d$ outputs, where
the outputs satisfy $a \oplus_d b =f(x,y)$, with $d$ prime and any  additively inseparable function $f(x,y)$
(\ie $f(x,y)\neq f_1(x)+f_2(y)$).
Any functional box which cannot be simulated classically would also lead to a trivialization of 
communication complexity \cite{PRd}. 
{Furthermore, a generalization to binary outcome multiparty
communication complexity scenarios was also considered in Ref. \cite{BP05}. In the multiparty problem, $n$ parties are each 
given an input $x_i$ and must compute a function $f(\vec{x})$ of their joint inputs with as little communication as possible. 
In Ref. \cite{BP05}, it was shown that if the parties shared a sufficient number of the full correlation (binary outcome) box with input-output
relation given by $\bigoplus_{i} a_i =x_1\cdot \ldots \cdot x_n$, then any $n$-party communication complexity
problem can be solved with only $n-1$ bits of communication (from $n-1$ parties to the first party who then computes the function), 
thus leading to a trivialization. Analogous results for $d$ output $n$-party functional boxes can also be derived, as we prove in Supplemental Material.

{We now} use our bounds to prove that
any uniformly distributed  total function $n$-party \xor-$d$ game, \ie a game where all $n$-{tuples} of inputs appear with probability greater than zero,
cannot be won with probability $1$ by a quantum strategy unless the game is trivial, \ie the classical value is $1$. 
The boxes that win some of these games with probability one correspond to nontrivial functional boxes, 
hence our result excludes in a simple manner the possibility of {quantum} realization of functional boxes that trivialize
communication complexity in the multiparty scenario (see Supplemental Material).

\begin{theorem}\label{totalfunc}
 For a $n$-player $d$ outcome {\sc{xor}} game $g_n^{\oplus}$ with $m$ questions per player and uniform input distribution
 $p(\vec{x})=1/m^n$, $\omega_q(g_n^{\oplus})=1$ iff $\omega_c(g_n^{\oplus})=1$.
\end{theorem}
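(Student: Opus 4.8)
The plan is to prove the two implications separately, with essentially all the work residing in the ``only if'' direction. The ``if'' direction is immediate: any classical deterministic strategy is a special case of a quantum strategy (a product state measured in a fixed basis), so $\omega_c(g_n^{\oplus})=1$ trivially yields $\omega_q(g_n^{\oplus})=1$.

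For the converse I would invoke the norm bound of Theorem \ref{thmnormn} in its \xor-$d$ form. Writing $\Phi_k^S$ for the game matrices attached to a bipartition $S$, the uniform distribution $p(\vec{x})=1/m^n$ together with $|\chi_k(f(\vec{x}))|=1$ makes every entry of $\Phi_k^S$ have modulus $1/m^n$, so its Frobenius norm is exactly $1/\sqrt{m^n}$ for each $k$. Since the spectral norm never exceeds the Frobenius norm, $\sqrt{m^n}\,\norm{\Phi_k^S}\le 1$, and summing over $k\in G\setminus\{e\}$ shows the bound \eqref{normmulti} is at most $\frac1d\de{1+(d-1)}=1$ for \emph{every} choice of $S$. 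Thus $\omega_q(g_n^{\oplus})\le 1$ always; and if $\omega_q(g_n^{\oplus})=1$, then since each individual bound is at most $1$ and their minimum equals $1$, every individual bound must equal $1$ simultaneously.

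Tightness forces $\norm{\Phi_k^S}=1/\sqrt{m^n}$, i.e.\ spectral norm equal to Frobenius norm, for all $k$ and all $S$; as $\Phi_k^S\neq 0$, this holds exactly when $\Phi_k^S$ has rank one. I would then specialize to the single-player cuts $S=\{A_i\}$ and to the faithful character $\chi_1(a)=\zeta^a$ with $\zeta=e^{2\pi i/d}$ (so $\chi_1(s)=1$ iff $s\equiv 0 \pmod d$, requiring no primality of $d$), and read the rank-one condition off from the vanishing of all $2\times 2$ minors: for all inputs, $f(x_i,\vec{w})+f(x_i',\vec{w}')\equiv f(x_i,\vec{w}')+f(x_i',\vec{w})\pmod d$. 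Equivalently the difference $f(x_i,\cdot)-f(x_i',\cdot)$ is independent of the remaining coordinates, so $f$ is additively separable across each single-player cut, $f(\vec{x})\equiv g^{(i)}(x_i)+h^{(i)}(\vec{x}_{-i})\pmod d$ for every $i$.

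The last step is a purely combinatorial lemma: separability across every single-player cut implies full separability $f(\vec{x})\equiv c+\sum_i f_i(x_i)\pmod d$. I would fix a reference input $\vec{x}^0$, set $f_i(x_i)=f(x_1^0,\ldots,x_i,\ldots,x_n^0)-f(\vec{x}^0)$, and telescope $f(\vec{x})-f(\vec{x}^0)$ coordinate by coordinate; each telescoped difference is independent of the coordinates already varied, by single-cut separability, so it collapses to $\sum_i f_i(x_i)$. Once $f$ is fully separable, the deterministic strategy in which player $i$ outputs $f_i(x_i)$ (with $c$ absorbed into one player's answer) wins with certainty, giving $\omega_c(g_n^{\oplus})=1$. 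I expect the main obstacle to be the clean bookkeeping of this gluing lemma over $\mathbb{Z}_d$ — verifying that ``local'' separability across each individual cut really does assemble into ``global'' separability — whereas the Frobenius/spectral comparison and the extraction of the rank-one minor condition are routine.
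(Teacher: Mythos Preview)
Your proof is correct and follows the same overall architecture as the paper's: invoke the norm bound of Theorem~\ref{thmnormn}, show that $\omega_q=1$ forces $\norm{\Phi_k^S}=1/\sqrt{m^n}$ for every $k$ and every single-player cut $S=\{A_i\}$, deduce from this that $f$ is additively separable, and read off a perfect classical strategy.

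The difference lies in the technical lemmas. The paper bounds $\norm{\Phi_k^S}$ by observing that $\Phi_k^S(\Phi_k^S)^\dagger$ is an $m\times m$ matrix whose entries all have modulus $1/m^{n+1}$, then analyzes the eigenvector equation $\Phi_1^S(\Phi_1^S)^\dagger\ket{\lambda}=\frac{1}{m^n}\ket{\lambda}$ component by component to extract phases $\theta^{A_i}_{x_i}$ satisfying $f(\ldots,x_i,\ldots)-f(\ldots,x_i',\ldots)=\theta^{A_i}_{x_i}-\theta^{A_i}_{x_i'}$. You instead use the cleaner observation that the Frobenius norm of $\Phi_k^S$ is exactly $1/\sqrt{m^n}$, so equality of spectral and Frobenius norms forces rank one, and then read off the separability relation directly from the vanishing $2\times 2$ minors together with faithfulness of $\chi_1$. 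Your telescoping argument for gluing the single-cut separabilities into global separability is also more explicit than the paper's direct substitution chain. Both routes are valid; yours avoids the eigenvector bookkeeping and makes the rank-one structure transparent, while the paper's eigenvector phases $\theta^{A_i}_{x_i}$ give the classical strategy slightly more directly.
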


The proof of Theorem \ref{totalfunc} can be found in Supplemental Material \cite{supmat}.

We also note that a more general result was derived in Ref.~\cite{noextbox}, showing that no nonlocal extremal box of general no-signaling 
theories can be realized within quantum theory. 
{It remains an open question whether all such nonlocal extremal boxes can lead to a trivialization of the communication complexity problem, and 
whether a simple proof (as for \xor-$d$ games above) can be provided for these general boxes as well.}

\section{Device independent witnesses of genuine tripartite entanglement}\label{diew}
As another application of our central result, we now present a systematic way to derive device-independent witnesses for
genuine multipartite entanglement (DIEW) for tripartite systems.

Characterizing entanglement is a very challenging task.
For bipartite systems positive maps that are not completely positive constitute a powerful tool for generating
 simple operational criteria for detecting entanglement in mixed states \cite{Horodecki1996}.
The most celebrated example is the Peres-Horodecki criterion, also known as PPT
 (positive under partial transposition) criterion.
On the other hand the characterization of multipartite entanglement is even more challenging since inequivalent forms of entanglement appear.
For the detection of genuine multipartite entanglement there is no such direct criteria like the PPT-criteria, however a connexion between 
positive maps and witnesses of genuine multipartite entanglement was established in Ref. \cite{gmePositMaps}, where a framework 
to construct witnesses of genuine multipartite entanglement from positive maps is derived. Other criteria 
to detect genuine multipartite entanglement were proposed in Refs. \cite{Marcus1,Marcus2}.
The development of device-independent witnesses brings together with all the advantage of detecting entanglement 
the possibility of performing this task in
a scenario where we do not have full trust in our devices, which has many applications in cryptographic tasks.

DIEWs were introduced in reference \cite{DIEW} where the authors present a tripartite 3-input 2-output Bell inequality which is able to detect genuine
tripartite entanglement in a noisy three qubit GHZ state, $\rho(V)=V\ketbra{GHZ}{GHZ}+(1-V) \frac{\I}{8}$, for parameter $V> 2/3$. 
In Ref.~\cite{DIEWPalVertesi},
Pál and Vértesi present multisetting Bell inequalities that in the limit of infinitely many inputs
are able to detect genuine tripartite entanglement for parameter as low as $2/\pi$, which is the limit value for which there exist a local model
for the noisy GHZ state for full-correlation Bell inequalities. Other examples of DIEWs with binary outcomes can be found in Ref. \cite{egDIEW}.

A biseparable state of three parties  is a state that can be decomposed into the form
\begin{align}\label{biseparable}
 \rho_{\bi}=p_1\rho_A \otimes \rho_{BC}+p_2\rho_B \otimes \rho_{AC}+p_3\rho_C \otimes \rho_{AB},
\end{align}
where $\vec{p} = \de{p_1,p_2,p_3}$ is a probability vector, $\rho_X$ is a one-party density operator and $\rho_{YZ}$ a two-party density operador.
If a tripartite quantum state cannot be decomposed in the form \eqref{biseparable} it is
said to be genuinely tripartite entangled.  

Let us consider that Alice, Bob and Charlie share a biseparable state of the form $\ket{\psi_{\bi}}=\ket{\psi}_{AB} \otimes \ket{\psi}_{C}$ in 
a tripartite Bell scenario.
In that case Alice and Bob can apply a quantum strategy using their shared entangled state but the best Charlie can do is to apply a classical (deterministic) strategy.
For this case their probability of success will be reduced to
\begin{align}
  \omega(g_3^{\ell}) \leq  \frac{1}{|G|} \bigg( & 1 + \sum_{x,y,z}\sum_{k\in G\setminus \DE{e}}p(x,y,z)\\
  & \times \chi_k(f(x,y,z))\chi_k(-c_z)\bra{\psi}A_x^k\otimes B_y^k\ket{\psi}\bigg), \nonumber
 \end{align}
 where $\DE{c_z}$, $c_z \in G$, represents the outputs of the deterministic strategy performed by Charlie.

 Now by making use of the norm bounds we can bound the performance of the players sharing state  $\ket{\psi_{\bi}}$ in a linear game $g_3^{\ell}$:
   \begin{align}\label{normbic}
  \omega_{\bi}^C(g_3^{\ell}) \leq \max_{\DE{c_z}} \frac{1}{|G|}\de{1+\sqrt{Q_1Q_2}\sum_{k \in G \setminus \DE{e}}\norm{\Phi^{\bi}_k(c_z)}},
 \end{align}
where
 \begin{align}
\Phi^{\bi}_k(c_z)=\sum_{x,y} \de{\sum_z p(x,y,z)\chi_k(f(x,y,z)-c_z)}\ketbra{x}{y}.
\end{align}

In Eq.~\eqref{normbic} we have derived an upper bound for the performance of the players in a game $g_3^{\ell}$ when the players share a quantum state which is
biseparable in the partition $AB|C$. 
In the case of Bell inequalities invariant under permutation of the parties it is sufficient to consider Eq.~\eqref{normbic}.
For general Bell inequalities, the biseparable bound that holds for any state of the form given by Eq.~\eqref{biseparable} can be obtained by taking the maximum 
over the partitions:
\begin{align}
  \omega_{\bi}\leq\max_{X} \omega_{\bi}^X,
\end{align}
where $X \in\DE{A,B,C}$.

 In general we have
 \begin{align}
  \omega_c \leq \omega_{\bi} \leq \omega_q
 \end{align}
 and then, for games where $\omega_{\bi} < \omega_q$, by violating the biseparable bound $\omega_{\bi}$ we can certify in a device independent way that Alice, Bob and Charlie
 share a genuine tripartite entangled quantum state. 

It is important to note that here we are not witnessing genuine multipartite nonlocality. 
DIEWs are a weaker condition than Svetilichny inequalities \cite{Svetlichny}.
Svetlichny inequalities were introduced in multipartite Bell scenarios in order
to detect the existence of genuine multipartite nonlocality. The violation of a Svetlichny inequality guarantees that even if some
parties are allowed to perform a joint strategy they are not able to simulate the exhibited multipartite correlations.
The violation of the bound $\omega_{\bi}$ guarantees that 
the parties share a genuinely tripartite entangled state
but not necessarily that they have genuine tripartite 
nonlocality.

\section{Explicit example and numerical results}\label{example}
Now we exemplify our method deriving a DIEW from a 3-input 3-output tripartite Bell inequality.

Inspired by the Mermin inequality \cite{Mermin} for the GHZ paradox, we can consider the following game:
A referee picks question $x, y, z \in \DE{0,1,2}$ with the promise that $x\oplus_3 y \oplus_3 z =0$. 
The players are supposed to give answers $a,b,c \in \DE{0,1,2}$ in order to satisfy
\begin{align}\label{gameghz}
a \oplus_3 b \oplus_3 c=x\cdot y \cdot z \;\;\; \text{s.t.}\;\;\; x\oplus_3 y \oplus_3 z=0
\end{align}
where the operations $\oplus_3$, $\cdot$ represent, respectively, sum and multiplication modulo 3.

Using the method explained in Section \ref{diew} we can derive the bound 
\begin{align}
 \omega_{\bi} \leq 0.896
\end{align}
(with rhs approximated up to the third decimal).
On the other hand, the $GHZ_3$ state defined as
\begin{align}
 \ket{GHZ_3}=\frac{\ket{000}+\ket{111}+\ket{222}}{\sqrt{3}}
\end{align}
can win the game with probability $1$. Hence we have a device-independent witness 
of genuine tripartite entanglement, since the bi-separable bound is strictly smaller than the quantum bound.
The projective measurements that lead to the maximum value $1$ are
explicitly written in the Supplemental Material \cite{supmat}.

This example also stresses the difference between genuine multipartite entanglement and genuine multipartite nonlocality,
since the Svetlichny bound
\cite{Svetlichny} for this game is $1$, \ie this game cannot be used as a witness of genuine 
tripartite nonlocality, despite being a good witness for  genuine tripartite entanglement.

Now we consider the noisy $GHZ_3$ state
\begin{align}
 \rho_3(V)=V\ketbra{GHZ_3}{GHZ_3}+(1-V)\frac{\I}{27}\,.
\end{align}
By using the optimal measurements for the $GHZ_3$ state we are able to witness genuine multipartite entanglement for $V> 0.85$.

The 3-input 2-output witness presented in Ref. \cite{DIEW} is able to detect genuine multipartite entanglement in $\rho_3(V)$ for $V>
0.81$, however the two-outcome DIEW involves the calculation of 18 expected values whereas here only 9 expected values are involved.

It is important to stress that the method presented here is very general and provide an easy and direct way to find
bi-separable bounds for any 3-player linear game. This opens the possibility for the search of good witnesses of genuine 
multipartite entanglement for high-dimensional systems.

\section{Discussions} \label{conclu}
We have presented an  {efficiently computable} bound  {on} the quantum value of  {an} $n$-player linear game. 
This bound  leads to nontrivial results 
 {and enables proofs of several interesting properties of these games}.
As an application we {have proved an} upper bound on the quantum value of a multipartite generalization of the CHSH-$d$ game, 
 {analogous to the {well-known expression introduced by Svetlichny}. 
 We have also used the bound to show that quantum 
mechanics cannot realize multipartite functional boxes {that win nontrivial linear games and lead to a} trivialization of 
communication complexity.
Finally, we  {have presented} a systematic way to derive device independent witnesses of genuine multipartite entanglement.
The method is very general and can be applied to any tripartite linear game in order to derive tripartite DIEWs with many outcomes.
We  {exhibited} an example were a DIEW involving only $9$ expected values is able to detect genuine tripartite entanglement
in a  {noisy} $GHZ_3$ state.
It remains  {an} open point  {whether these} DIEWs with $d$ outcomes are optimal in terms of number of inputs to detect genuine tripartite
entanglement of systems of dimension $d$.
The search for optimal witnesses with few inputs per player would lead to feasible  {applications} in experiments.

{An interesting question for further research is to investigate the class of functions where the bound on linear games proposed
here is saturated by a quantum strategy. It would also be important to use the bound to identify classes of functions where the 
optimal quantum strategy does not outperform the optimal classical strategy for a linear game, these functions such as those corresponding 
to a distributed nonlocal computation \cite{NLC} can be used to partially characterize the set of quantum correlations. The use of methods based
on Fourier transforms on finite Abelian groups is ubiquitous in communication complexity theory, an important question is to investigate the 
relation between the bounds on linear games derived here and the communication complexity of the associated functions.}
 {Finally,} the bounds derived in Theorem \ref{thmnormn} involves the norm of a matrix which is an object with an intrinsic bipartite structure. 
 A possible future direction would be to  {explore the use of tensors which have} a natural multipartite structure in order to describe the game. 

\textit{Acknowledgements.}
We thank Leonardo Guerini, Cristhiano Duarte, Flavien Hirsch, Mateus Araújo, Ana Cristina Vieira, Marcus Huber, and  Remigiusz Augusiak for useful discussions.
This work was  {supported} by Fundação de Amparo à Pesquisa do Estado de Minas Gerais (FAPEMIG), NCN grant 2013/08/M/ST2/00626, 
Conselho Nacional de Desenvolvimento Científico e Tecnológico (CNPq),  Coordenação de Aperfeiçoamento de Pessoal de Nível Superior (CAPES).
{R.R. is supported by the ERC AdG grant QOLAPS and the Foundation for Polish Science TEAM project co-financed by the EU European Regional Development Fund.}

\appendix
\onecolumngrid

\section*{Supplemental Material: Quantum bounds on  multiplayer linear games and device-independent witness of genuine tripartite entanglement}

\twocolumngrid

\section{An efficiently computable bound on the quantum value of multiplayer linear games}

We have stated that we can associate to the game $ g_3^{\ell}$ a set of $|G|$ matrices  {$\Phi_k$ for $k \in G$},
 which carry all the information necessary to describe the game:
 the probability distribution with which the referee picks questions and also the winning condition. The game matrices are defined as
 \begin{align}
 \Phi_k=\sum_{(x,y,z) \in \vec{Q}}p(x,y,z)\chi_k(f(x,y,z))\ketbra{x}{yz}.
\end{align}

And we make use of these matrices to state our main result

\begin{theorem}\label{sthmnorm3}
 The quantum value of a tripartite linear game, $g_3^{\ell}(G,f,p)$, where players $A, B$ and $C$ receive questions 
 $(x,y,z)\in [Q_1]\times [Q_2] \times [Q_3]$ and answer with elements of an
 Abelian group $(G,+)$, is upper bounded by
 \begin{align}\label{snorm3}
  \omega_q(g_3^{\ell}) \leq \frac{1}{|G|}\de{1+\sqrt{Q_1Q_2Q_3}\sum_{k \in G \setminus \DE{e}}\norm{\Phi_k}},
 \end{align}
where  {$\norm{\cdot}$} denotes the maximum singular value of the matrix, $e$ is the identity element of the group G, and
$ \Phi_k$ are the game matrices.
\end{theorem}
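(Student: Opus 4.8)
The plan is to start from the Fourier-transformed form of the success probability, Eq.~\eqref{wcor}, specialised to quantum strategies so that each generalized correlator factorizes as $\mean{A_x^kB_y^kC_z^k}=\bra{\psi}A_x^k\ot B_y^k\ot C_z^k\ket{\psi}$. The whole problem then reduces to bounding, for each fixed $k\in G\setminus\DE{e}$, the term
\[
T_k=\sum_{x,y,z}p(x,y,z)\chi_k(f(x,y,z))\bra{\psi}A_x^k\ot B_y^k\ot C_z^k\ket{\psi}
\]
by $\sqrt{Q_1Q_2Q_3}\,\norm{\Phi_k}$. Summing these estimates over $k$ and taking the supremum over all strategies then yields Eq.~\eqref{snorm3}: the quantity $\sum_k T_k=|G|\,\omega-1$ is real (being a multiple of a probability), hence bounded above by its modulus and thus by $\sum_k|T_k|$.

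The first ingredient is that each operator $A_x^k=\sum_a\bar{\chi}_k(a)M_x^a$ is \emph{unitary}. Indeed, the characters satisfy $|\chi_k(a)|=1$, and the $\DE{M_x^a}_a$ form a projective measurement, so that $M_x^aM_x^{a'}=\delta_{a,a'}M_x^a$ and $\sum_aM_x^a=\I$; a one-line computation then gives $A_x^k(A_x^k)^{\dg}=\sum_a M_x^a=\I$, and the same holds for $B_y^k$ and $C_z^k$. This unitarity is exactly what will produce the combinatorial prefactors $Q_1,Q_2,Q_3$.

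The central step is to recast $T_k$ as a single inner product against the game matrix, using the partition $S=\DE{A}$. I would introduce, on the $A$-side, the enlarged vector $\ket{\tilde\alpha}=\sum_x\ket{x}\ot\big((A_x^k)^{\dg}\ot\I_{BC}\big)\ket{\psi}$ in $\mathbb{C}^{Q_1}\ot\hcal$, and on the $BC$-side the vector $\ket{\tilde\beta}=\sum_{y,z}\ket{yz}\ot\big(\I_A\ot B_y^k\ot C_z^k\big)\ket{\psi}$ in $\mathbb{C}^{Q_2Q_3}\ot\hcal$, where $\hcal=\hcal_A\ot\hcal_B\ot\hcal_C$. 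Since $\bra{x}\Phi_k\ket{yz}=p(x,y,z)\chi_k(f(x,y,z))$, expanding the matrix element block by block gives
\[
T_k=\bra{\tilde\alpha}\big(\Phi_k\ot\I_{\hcal}\big)\ket{\tilde\beta}.
\]
Cauchy--Schwarz together with submultiplicativity of the operator norm, and the fact that $\norm{\Phi_k\ot\I}=\norm{\Phi_k}$, yields $|T_k|\leq\norm{\ket{\tilde\alpha}}\,\norm{\Phi_k}\,\norm{\ket{\tilde\beta}}$. The unitarity now enters through the norms: $\norm{\ket{\tilde\alpha}}^2=\sum_x\bra{\psi}\big(A_x^k(A_x^k)^{\dg}\ot\I\big)\ket{\psi}=\sum_x 1=Q_1$, and likewise each $\ket{yz}$-block contributes $\bra{\psi}\big(\I\ot(B_y^k)^{\dg}B_y^k\ot(C_z^k)^{\dg}C_z^k\big)\ket{\psi}=1$, so $\norm{\ket{\tilde\beta}}^2=Q_2Q_3$. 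Combining gives $|T_k|\leq\sqrt{Q_1Q_2Q_3}\,\norm{\Phi_k}$, and the theorem follows.

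The main obstacle is the bookkeeping in the reshaping step: choosing the correct enlarged vectors so that the operator-valued bilinear form collapses to an inner product with $\Phi_k\ot\I$, keeping careful track of which operators are daggered so that the partial isometries line up, and respecting that $\Phi_k$ is rectangular, mapping the $(y,z)$-index space into the $x$-index space. Once this identification is pinned down, the norm evaluations and the Cauchy--Schwarz estimate are routine. The same construction applies verbatim to an arbitrary partition $S$ and to $n$ players, which is how one obtains Theorem~\ref{thmnormn}.
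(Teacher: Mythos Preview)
Your proof is correct and follows essentially the same approach as the paper: reshape each $k$-summand as a bilinear form $\bra{\tilde\alpha}(\Phi_k\otimes\I)\ket{\tilde\beta}$ between enlarged vectors built from the unitary Fourier operators, then bound by $\norm{\Phi_k}$ times the norms of the vectors, which evaluate to $\sqrt{Q_1}$ and $\sqrt{Q_2Q_3}$ by unitarity. The only cosmetic differences are that the paper normalizes the vectors $\ket{\alpha^k},\ket{\beta^k}$ up front (absorbing the $\sqrt{Q_1Q_2Q_3}$ into the prefactor) and is slightly less explicit about the dagger placement and the ``$\sum_kT_k$ is real'' step that you spell out.
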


\begin{widetext}
\begin{proof}
 In order to prove the theorem we first define the generalized correlators via the Fourier transform of the 
 probabilities
 \begin{align}
  \mean{A_x^iB_y^jC_z^k}=\sum_{a,b,c \in G} \bar{\chi}_i(a)\bar{\chi}_j(b)\bar{\chi}_k(c)P(a,b,c|x,y,z)
 \end{align}
 where $\chi_j$ are the characters of the abelian group $(G,+)$.

A straightforward calculation of $P(a+b+c=f(x,y,z)|x,y,z)$, using the characters relations  $\bar{\chi}_i(a)=\chi_i(-a)$, 
$\displaystyle{\sum_{a \in G}\chi_i(a)\bar{\chi}_j(a)=|G|\,\delta_{i,j}}$, and $\chi_i(a)\chi_i(b)=\chi_i(a+b)$, gives us
\begin{align}\label{sw}
  \omega(g_3^{\ell}) = \frac{1}{|G|}\de{1+\sum_{x,y,z}\sum_{k\in G\setminus \DE{e}}p(x,y,z)\chi_k(f(x,y,z))\mean{A_x^kB_y^kC_z^k}}
 \end{align}
 and for a quantum strategy where measurements $\{M_x^a\}$, $\{M_y^b\}$, $\{M_z^c\}$ are 
 performed on the tripartite quantum state $\ket{\psi}$ we have
 \begin{align}\label{swq}
  \omega(g_3^{\ell}) = \frac{1}{|G|}\de{1+\sum_{x,y,z}\sum_{k \in G\setminus \DE{e}}p(x,y,z)\chi_k(f(x,y,z))\bra{\psi}A_x^k\otimes B_y^k \otimes C_z^k\ket{\psi}}.
 \end{align}

  Since the generalized correlators are unitary operators we can define the following normalized vectors
  \begin{subequations}
 \begin{align}
  \ket{\alpha^k}=&\frac{1}{\sqrt{Q_1}}\sum_{x \in [Q_1]}A_x^k\otimes \I_{BC} \otimes \I_{Q_1} \ket{\psi}\ket{x}\\
  \ket{\beta^k}=& \frac{1}{\sqrt{Q_2Q_3}}\sum_{x,y \in [Q_2]\times [Q_3]} \I_{A}\otimes B_y^k \otimes C_z^k  \otimes \I_{Q_1,Q_2} \ket{\psi}\ket{y,z}.
 \end{align}
 \end{subequations}

Now using the game matrices defined in the text
 \begin{align}
 \Phi_k=\sum_{(x,y,z) \in \vec{Q}}p(x,y,z)\chi_k(f(x,y,z))\ketbra{x}{yz}.
\end{align}
One have that
 \begin{align}\label{swq3}
  \omega(g_3^{\ell}) &= \frac{1}{|G|}\de{1+\sqrt{Q_1Q_2Q_3}\sum_{k \in G\setminus \DE{e}}\bra{\alpha^k}\I_{ABC}\otimes \Phi_k \ket{\beta^k}}\nonumber\\
  &\leq  \frac{1}{|G|}\de{1+\sqrt{Q_1Q_2Q_3}\sum_{k \in G\setminus \DE{e}}\norm{\I_{ABC}\otimes \Phi_k}}\\
  &=  \frac{1}{|G|}\de{1+\sqrt{Q_1Q_2Q_3}\sum_{k \in G\setminus \DE{e}}\norm{\Phi_k}}\nonumber,
 \end{align}
 where $\norm{\cdot}$ denotes the maximum singular value of the matrix.
 \end{proof}
  \end{widetext}

The bound can be generalized for $n$-players game.

\begin{theorem}\label{sthmnormn}
 Consider an $n$-player linear game,  $g_n^{\ell}(G,f,p)$. Let $S$ be a subset
 of the parties, $S \subset \DE{A_1,...,A_n}$.
 The quantum value of an $n$-player linear game, $g_n^{\ell}(G,f,p)$, is upper bounded by
 \begin{align}\label{snormmulti}
  \omega_q(g_n^{\ell}) \leq \min_{S} \frac{1}{|G|}\de{1+\sqrt{Q_1\ldots Q_n}\sum_{k \in G\setminus \DE{e}}\norm{\Phi^S_k}},
 \end{align}
where $\norm{\Phi^S_k}$ denotes the maximum singular value of matrix $\Phi^S_k$, and the game matrices are defined as 
\begin{align}
 \Phi^S_k=\sum_{{\vec{x} \in \vec{Q}_S, \vec{y} \in \vec{Q}_{S^c}}}p(\vec{x},\vec{y})\chi_k(f(\vec{x},\vec{y}))\ketbra{\vec{x}}{\vec{y}}.
\end{align}
$\vec{x} \in {\vec{Q}_S}$ denotes the vector of inputs to the players that belong to set $S$, and $S^c$ is the complement of $S$.
\end{theorem}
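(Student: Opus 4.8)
The plan is to reduce the $n$-party bound to the effectively bipartite argument already carried out for Theorem~\ref{sthmnorm3}, by grouping the parties of $S$ into one block and those of $S^c$ into the complementary block. Each bipartition produces its own valid upper bound, and taking the minimum over $S$ yields~\eqref{snormmulti}.

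First I would repeat, now for $n$ parties, the character manipulation that produced~\eqref{swq}. Because the predicate $V(\vec{a}|\vec{x})$ equals $1$ precisely when $a_1+\ldots+a_n=f(\vec{x})$, the Fourier expansion of $P(a_1+\ldots+a_n=f(\vec{x})|\vec{x})$ factorizes over all $n$ parties, giving
\begin{align}
\omega(g_n^{\ell})=\frac{1}{|G|}\de{1+\sum_{\vec{x}}\sum_{k\in G\setminus\DE{e}}p(\vec{x})\chi_k(f(\vec{x}))\bra{\psi}A_{1,x_1}^k\ot\ldots\ot A_{n,x_n}^k\ket{\psi}},
\end{align}
with $A_{i,x_i}^k=\sum_a\bar{\chi}_k(a)M_{x_i}^a$. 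Exactly as in the tripartite case, each correlator is unitary — this uses $|\chi_k(a)|=1$ together with orthogonality of the projectors $M_{x_i}^a$ — and hence so is any tensor product of correlators over a block of parties.

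Fixing the partition $S$, I would then introduce the unit vectors that collect the correlators of the two blocks, each carrying an input register for its own block:
\begin{subequations}
\begin{align}
\ket{\alpha^k_S}&=\frac{1}{\sqrt{\prod_{i\in S}Q_i}}\sum_{\vec{x}_S}\de{\bigotimes_{i\in S}(A_{i,x_i}^k)^{\dagger}}\ot\I_{S^c}\ket{\psi}\ket{\vec{x}_S},\\
\ket{\beta^k_S}&=\frac{1}{\sqrt{\prod_{j\in S^c}Q_j}}\sum_{\vec{x}_{S^c}}\I_S\ot\de{\bigotimes_{j\in S^c}A_{j,x_j}^k}\ket{\psi}\ket{\vec{x}_{S^c}}.
\end{align}
\end{subequations}
Unitarity of the grouped correlators, together with orthonormality of the input registers, makes $\ket{\alpha^k_S}$ and $\ket{\beta^k_S}$ unit vectors. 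Inserting the definition of $\Phi^S_k$ and collapsing the input registers by their orthonormality, a direct computation shows that $\sqrt{Q_1\ldots Q_n}\,\bra{\alpha^k_S}\I\ot\Phi^S_k\ket{\beta^k_S}$ reproduces the $k$-th summand above, so that $\omega(g_n^{\ell})=\frac{1}{|G|}\de{1+\sqrt{Q_1\ldots Q_n}\sum_k\bra{\alpha^k_S}\I\ot\Phi^S_k\ket{\beta^k_S}}$.

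To finish, I would bound each inner product by the operator norm, $|\bra{\alpha^k_S}\I\ot\Phi^S_k\ket{\beta^k_S}|\leq\norm{\I\ot\Phi^S_k}=\norm{\Phi^S_k}$, since the largest singular value is unchanged by tensoring with the identity; this gives the bound for the chosen $S$, and as $S$ was arbitrary the quantum value is at most the minimum over all partitions, which is~\eqref{snormmulti}. The main obstacle is purely the index bookkeeping in that identity: one must verify that absorbing several parties into a single block leaves the factorized character structure intact, and that the daggers on the $S$-block correlators are placed so that the surviving expectation value is $\bra{\psi}A_{1,x_1}^k\ot\ldots\ot A_{n,x_n}^k\ket{\psi}$ rather than its complex conjugate. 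Conceptually nothing goes beyond the bipartite argument of Theorem~\ref{sthmnorm3}; the only genuinely new ingredient is the observation that every bipartition of the players furnishes an admissible game matrix $\Phi^S_k$ and therefore an independent bound.
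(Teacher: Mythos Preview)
Your proposal is correct and follows essentially the same route as the paper: rewrite the success probability via characters, pack the $S$- and $S^c$-block correlators into two unit vectors carrying input registers, recognize the inner product as $\bra{\alpha}\I\ot\Phi^S_k\ket{\beta}$, bound by $\norm{\Phi^S_k}$, and minimize over $S$. Your treatment is in fact slightly more careful than the paper's in placing the daggers on the $S$-block correlators so that the resulting expectation value comes out as $\bra{\psi}A_{1,x_1}^k\ot\ldots\ot A_{n,x_n}^k\ket{\psi}$ rather than with an unwanted adjoint on one block.
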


\begin{widetext}
\begin{proof}
 The proof is a straightforward generalization of the tripartite case.
 Considering a $n$-player linear game  $g_n^{\ell}(G,f,p)$, where player $A_i$ receive one among $Q_i$ inputs and outputs $a_i \in G$,
 we can define the generalized correlators as the Fourier transform of the probabilities
 \begin{align}
  \mean{{A_1}_{x_1}^{i_1}\ldots {A_n}_{x_n}^{i_n}}=\sum_{a_1,\ldots, a_n \in G} \bar{\chi}_{i_1}(a_1) \ldots  \bar{\chi}_{i_n}(a_n)P(a_1,\ldots, a_n|x_1,\ldots,n_n).
 \end{align}

 Then analogously to the tripartite case, the probability of success can be written as:
\begin{align}\label{swn}
  \omega(g_n^{\ell}) = \frac{1}{|G|}\de{1+\sum_{x_1,\ldots,x_n}\sum_{k \in G\setminus \DE{e}}p(x_1,\ldots, x_n)\chi_k(f(x_1,\ldots,x_n))\mean{{A_1}_{x_1}^k \ldots {A_n}_{x_n}^k}}.
\end{align}

 Now let us consider $S$ a subset  of the parties $S \subset \DE{A_1,...,A_n}$. Then by using the matrices 
 \begin{align}
 \Phi^S_k=\sum_{\vec{x} \in \vec{Q}_S, \vec{y} \in \vec{Q}_{S^c}}p(\vec{x},\vec{y})\chi_k(f(\vec{x},\vec{y}))\ketbra{\vec{x}}{\vec{y}}
\end{align}
and defining the vectors
\begin{subequations}
  \begin{align}
  \ket{\alpha^k}=&\frac{1}{\sqrt{Q_{S}}}\sum_{\vec{x} \in \vec{Q}_{S}}\de{\otimes_{i \in S}A_{x_i}^k}\otimes \I_{S^c} \otimes \I_{Q_S} \ket{\psi}\ket{\vec{x}}\\
  \ket{\beta^k}=& \frac{1}{\sqrt{Q_{S^c}}}\sum_{\vec{y} \in \vec{Q}_{S^c}} \I_{S}\otimes \de{\otimes_{i \in S^c}A_{x_i}^k}  \otimes \I_{Q_{S^c}} \ket{\psi}\ket{\vec{y}},
 \end{align}
 \end{subequations}
 where $Q_S=\prod_{i}Q_{j_i}$, and $\vec{Q_S}=[Q_{j_1}]\times \ldots \times [Q_{j_k}]$ for $A_{j_i}\in S$.
 
 Now analogously to the tripartite case one has
 
\begin{align}\label{swqn}
  \omega(g_n^{\ell}) &= \frac{1}{|G|}\de{1+\sqrt{Q_1  \ldots Q_n}\sum_{k \in G\setminus \DE{e}}\bra{\alpha^k}\I_{A_1\ldots A_n}\otimes \Phi^S_k \ket{\beta^k}}\nonumber\\
  &\leq  \frac{1}{|G|}\de{1+\sqrt{Q_1  \ldots  Q_n}\sum_{k\in G\setminus \DE{e}}\norm{\I_{A_1\ldots A_n}\otimes \Phi^S_k}}\\
  &=  \frac{1}{|G|}\de{1+\sqrt{Q_1  \ldots  Q_n}\sum_{k\in G\setminus \DE{e}}\norm{\Phi^S_k}}\nonumber.
 \end{align}
 
 By the construction of the proof we see that for all subset $S$ we have a valid upper bound to the quantum value.
 \end{proof}
 \end{widetext}

\section{Finite Fields}\label{sfinitefield}

The generalization of the CHSH game that we consider, the CHSH${_n}$-$d$ game, is defined for elements of a finite field. Hence, for completeness,
in this Section we 
present the definition and properties of the operations in a finite field.

\begin{definition}[Finite Field]
A finite field $\mathbb{F}_d$ is a set of $d$ elements with the operations sum $+$ and multiplication $\cdot$ such that
\begin{enumerate}[(i)]
 \item $a+b$, $a \cdot b$ $\in \mathbb{F}_d$, $\forall a,b \in \mathbb{F}_d$,
 \item $\exists \; 0$ s.t. $a+0=a ,\, \forall a \in \mathbb{F}_d$,
  \item $\exists \; 1$ s.t. $a\cdot 1=a ,\, \forall a \in \mathbb{F}_d$,
  \item $\forall \; a, \exists \; -a$ s.t. $a+(-a)=0$,
  \item $\forall \; b \neq 0, \exists \; b^{-1}$ s.t. $b\cdot b^{-1}=1$,
  \item Associativity: $\forall a,b,c\,\in \mathbb{F}_d$
  \begin{align*}
   a+(b+c)&=(a+b)+c\\
   a\cdot (b\cdot c)&= (a \cdot b) \cdot c
  \end{align*}
\item Commutativity: $\forall a,b\,\in \mathbb{F}_d$
\begin{align*}
   a+b&=b+a\\
   a\cdot b&=  b \cdot a
  \end{align*}
 \item Distributivity: $\forall a,b,c\,\in \mathbb{F}_d$
 $$a\cdot (b+c)= a \cdot b + a \cdot c.$$
\end{enumerate}
\end{definition}

For $d$ prime all the conditions $(i)$-$(viii)$ can be satisfied by arithmetic modulo $d$.
For $d=p^r$ the arithmetic operations are defined by addition and multiplication of polynomials of degree $< r$ over $\mathbb{Z}_p$, $\mathbb{Z}_p\De{X}$. 
In order to construct a field $\mathbb{F}_{p^r}$ one starts by choosing an irreducible polynomial of degree $r$ over  $\mathbb{Z}_p$, this polynomial will
define the zero of the field  by the so called quotient. 

As an example for the field $\mathbb{F}_d$ with $d=2^3$ we can pick the polynomial $X^3+X+1 \in  \mathbb{Z}_2\De{X}$ from 
which we can obtain the relation 
\begin{align}\label{spirred}
X^3+X+1 =0 \Rightarrow X^3=X+1.
\end{align}
Now the elements of the field can be represented by strings $(a,b,c)$, with $a,b,c \in \DE{0,1}$, and we can associate each string with
the polynomial $aX^2+bX+c$. Given that, addition and multiplication will be taken as addition and multiplication of the polynomial reduced by  the 
relation \eqref{spirred}.

\section{$n$-party CHSH-$d$ game.}

The CHSH-$d$ game, for $d$ prime or power of prime, is a generalization of the most remarkable Bell scenario, the CHSH inequality \cite{CHSH}.
In the CHSH-$d$ game, Alice and Bob receive questions $x$ and $y$ chosen uniformly random from a finite field $\mathbb{F}_d$, and they are supposed 
to given answers $a$ and $b$ also from field $\mathbb{F}_d$ in order to satisfy
\begin{align}
 a+b=x\cdot y.
\end{align}

\begin{definition} The $n$-party CHSH-$d$ game, CHSH${_n}$-$d$, for $d$ prime or a power of prime, is a
linear game with winning condition 
\begin{align}
 a_1 + \ldots + a_n= \sum_{i,j>i} x_i \cdot x_j
 \end{align}
where addition and multiplication are operations defined over the field $\mathbb{F}_d$.
\end{definition}

As an example we consider the CHSH${_3}$-$3$, where the inputs and outputs are elements
of $\mathbb{Z}_3$, $a,b,c,x,y,z \in \DE{0,1,2}$, and as we
have seen in Section \ref{sfinitefield} $+,\cdot$ are sum and multiplication modulo 3. 
The characters of the group $(\mathbb{Z}_3,+)$ are $\chi_j=\zeta^j$, where $\zeta=e^{2\pi i/3}$ is the $3^{rd}$ root of unity.

The game matrix $\Phi_1$ for the  CHSH${_3}$-$3$ game will be defined as
\begin{align}
  \Phi_1=\sum_{x,y,z=0}^{2} \frac{1}{27} \zeta^{x\cdot y +x\cdot z+y \cdot z}\ketbra{x}{yz}
\end{align}
which is explicitly given by

\begin{equation}
  \Phi_1=\frac{1}{27} \begin{bmatrix} 
1&1&1&1&\zeta&\zeta^2&1&\zeta^2&\zeta \\
1 &\zeta &\zeta^2 &\zeta&1& \zeta^2&\zeta^2&\zeta^2&\zeta^2 \\
1 & \zeta^2 &\zeta&\zeta^2&\zeta^2  &\zeta^2&\zeta&\zeta^2&1 \end{bmatrix}.
\end{equation}

Now we prove an upper bound to the quantum value of the CHSH${_n}$-$d$ for all $n$ and all $d$ prime or power of prime.

\begin{theorem}\label{schshd}
 The quantum value of the CHSH${_n}$-$d$ game, for $d$ a prime or a power of a prime, obeys
 \begin{align}
\omega_q(\text{CHSH$_n$-$d$})\leq \frac{1}{d}+\frac{d-1}{d\sqrt{d}}  .
 \end{align}
\end{theorem}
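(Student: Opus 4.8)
The plan is to apply Theorem~\ref{sthmnormn} with the single-element partition $S=\DE{A_1}$, for which every player receives $Q_i=d$ questions so that $\sqrt{Q_1\ldots Q_n}=d^{n/2}$, and to reduce everything to showing that the associated game matrices satisfy $\Phi_k\Phi_k^{\dag}=\frac{1}{d^{n+1}}\I_d$ for each $k\in\mathbb{F}_d\setminus\DE{e}$. With the uniform distribution $p(\vec{x})=1/d^{n}$ and winning condition $f(\vec{x})=\sum_{i<j}x_i\cdot x_j$, the game matrix for this partition is
\begin{align}
\Phi_k=\frac{1}{d^{n}}\sum_{x_1}\sum_{x_2,\ldots,x_n}\chi_k\!\de{\textstyle\sum_{i<j}x_i x_j}\ketbra{x_1}{x_2\ldots x_n},
\end{align}
a $d\times d^{n-1}$ matrix whose rows are labelled by $x_1$ and whose columns by $(x_2,\ldots,x_n)$. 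Since the singular values of $\Phi_k$ are the square roots of the eigenvalues of $\Phi_k\Phi_k^{\dag}$, establishing the displayed proportionality immediately gives $\norm{\Phi_k}=d^{-(n+1)/2}$.

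First I would compute the entries of $\Phi_k\Phi_k^{\dag}$. Writing $\vec{w}=(x_2,\ldots,x_n)$ for the shared columns, one gets
\begin{align}
\de{\Phi_k\Phi_k^{\dag}}_{x_1,x_1'}=\frac{1}{d^{2n}}\sum_{\vec{w}}\chi_k\!\de{f(x_1,\vec{w})}\,\overline{\chi_k\!\de{f(x_1',\vec{w})}}.
\end{align}
The useful algebraic fact is that only the cross terms $x_1\cdot x_j$ with $j\geq 2$ in $f$ involve $x_1$, so by distributivity $f(x_1,\vec{w})-f(x_1',\vec{w})=(x_1-x_1')\cdot\sum_{j\geq 2}x_j$. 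Applying reflexivity and the homomorphism relation for the characters, the summand collapses to $\chi_k\!\de{(x_1-x_1')\sum_{j\geq 2}x_j}$, and a second use of the homomorphism property factorizes it over the independent variables into $\prod_{j=2}^{n}\chi_k\!\de{(x_1-x_1')x_j}$, after which the whole sum over $\vec{w}$ factorizes as $\prod_{j=2}^{n}\de{\sum_{x_j\in\mathbb{F}_d}\chi_k\!\de{(x_1-x_1')x_j}}$.

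The key step — and the place where the hypothesis that $d$ is a prime or a prime power genuinely enters — is evaluating each factor $\sum_{x_j\in\mathbb{F}_d}\chi_k\!\de{(x_1-x_1')x_j}$. If $x_1=x_1'$ the factor is $\sum_{x_j}\chi_k(0)=d$; if $x_1\neq x_1'$ then $x_1-x_1'$ is a nonzero field element, so multiplication by it permutes $\mathbb{F}_d$ and the factor equals $\sum_{b\in\mathbb{F}_d}\chi_k(b)=0$ by orthogonality with the trivial character (valid since $k\neq e$). This is exactly where the field structure is indispensable: in a mere ring with zero divisors a nonzero $x_1-x_1'$ need not act bijectively and the cancellation would fail. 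Hence every off-diagonal entry vanishes, each diagonal entry equals $\frac{1}{d^{2n}}\,d^{\,n-1}=\frac{1}{d^{n+1}}$, and therefore $\Phi_k\Phi_k^{\dag}=\frac{1}{d^{n+1}}\I_d$, giving $\norm{\Phi_k}=d^{-(n+1)/2}$.

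Finally I would substitute into the bound of Theorem~\ref{sthmnormn}, summing over the $d-1$ nontrivial values of $k$:
\begin{align}
\omega_q(\text{CHSH$_n$-$d$})\leq\frac{1}{d}\de{1+d^{n/2}\sum_{k=1}^{d-1}d^{-(n+1)/2}}=\frac{1}{d}+\frac{d-1}{d\sqrt{d}},
\end{align}
which is the claimed bound; notice that the dependence on the number of parties $n$ cancels entirely. I expect the main obstacle to be the middle computation: spotting the clean separation of the $x_1$-dependence in $f$ and then carrying the factorized character sum through to the isotropic conclusion $\Phi_k\Phi_k^{\dag}\propto\I_d$, since this is precisely the point at which the prime-power hypothesis does its work and all partial cancellations must be accounted for.
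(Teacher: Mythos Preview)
Your proof is correct and follows essentially the same route as the paper: choose the partition $S=\DE{A_1}$, compute $\Phi_k\Phi_k^{\dag}$ directly using the character relations to obtain $\frac{1}{d^{n+1}}\I_d$, and substitute $\norm{\Phi_k}=d^{-(n+1)/2}$ into Theorem~\ref{sthmnormn}. Your treatment is in fact somewhat more explicit than the paper's in isolating exactly where the field hypothesis enters, namely that multiplication by the nonzero element $x_1-x_1'$ permutes $\mathbb{F}_d$ so that the off-diagonal character sums vanish.
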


\begin{widetext}
\begin{proof}
The proof follows from direct calculation of $\Phi_k\Phi_k^{\dag}$ with the partition $S=A_1$.

For the CHSH${_n}$-$d$ game the game matrices are the following:
\begin{align}
 \Phi_k=\sum_{x_1,\ldots,x_n}\frac{1}{d^n}\chi_k(\sum_{j>i}x_i \cdot x_j)\ketbra{x_1}{x_2 \ldots x_n}.
\end{align}

By making use of the characters 
 relations $\bar{\chi}_i(a)=\chi_i(-a)$, $\displaystyle{\sum_{a \in G}\chi_i(a)\bar{\chi}_j(a)=|G|\,\delta_{i,j}}$ and 
 $\chi_i(a)\chi_i(b)=\chi(a+b)$ we have:
 \begin{align}
   \Phi_k\Phi_k^{\dag}&=\frac{1}{d^{2n}}\sum_{x_1,\ldots,x_n}\sum_{x'_1,\ldots,x'_n}\chi_k(\sum_{j>i}x_i \cdot x_j)\bar{\chi}_k(\sum_{j>i}x'_i \cdot x'_j)\ket{x_1}\braket{x_2 \ldots x_n}{x'_2 \ldots x'_n}\ket{x'_1}\nonumber\\ 
           & =\frac{1}{d^{2n}}\sum_{x_1,\ldots,x_n}\sum_{x'_1}\prod_{j>1}\chi_k(x_1 \cdot x_j)\bar{\chi}_k(x'_1 \cdot x_j)\ketbra{x_1}{x'_1}\\ 
           &=\sum_{x_1}\frac{1}{d^{n+1}}\ketbra{x_1}{x_1}\nonumber\\
           &=\frac{1}{d^{n+1}}\I_d, \nonumber
 \end{align}
which implies $\norm{\Phi_k}=\frac{1}{\sqrt{d^{n+1}}}$ for all $k$. Then by applying Theorem \ref{sthmnormn} we obtain
 \begin{align}
\omega_q(\text{CHSH$_n$-$d$})\leq \frac{1}{d}+\frac{d-1}{d\sqrt{d}}.
 \end{align}
\end{proof}
\end{widetext}

\section{Trivialization of multipartite communication complexity with nontrivial functional boxes}
In Ref. \cite{vanDam} it was shown that the possibility of existence of strong correlations known as PR-boxes \cite{prbox} 
would lead to the trivialization of communication complexity, since by sharing sufficient number of PR-boxes, Alice and Bob would
be able to compute any distributed boolean function with only one bit of communication. 
Later this result was generalized to functional boxes \cite{PRd}, \ie  a generalization of PR-boxes for $d$ outputs, where
the outputs satisfy $a \oplus_d b =f(x,y)$, with $d$ prime and any additively inseparable function $f(x,y)$.
Any functional box which cannot be simulated classically would also lead to a trivialization of 
communication complexity \cite{PRd}. Furthermore, a generalization to binary outcome multi-party
communication complexity scenarios was also considered in Ref. \cite{BP05}. In the multi-party problem, $n$ parties are each 
given an input $x_i$ and their goal is to  compute a function $f(\vec{x})$ of their joint inputs with as little communication as possible. 
In Ref. \cite{BP05}, it was shown that if the parties shared a sufficient number of $n$-party PR boxes, with input-output
relation given by $\bigoplus_{i} a_i = x_1 \cdot \ldots \cdot x_n$, then any $n$-party communication complexity 
problem can be solved with only $n-1$ bits of communication (from $n-1$ parties to the first party who then computes the function), 
thus leading to a trivialization. 

Here we formally derive an analogous result for prime $d$ output multipartite functional boxes, showing that any box which maximally 
saturate an \xor-$d$ game with $n$ players
lead to trivialization of computation of a multi-party function where only $n-1$ dits of communication are necessary.
We start by defining $PR_n$-$d$ boxes, a generalization of $PR$ boxes \cite{prbox} for $n$ parties and prime $d$ outputs.

\begin{definition}
 \begin{align}
  PR_n\text{-}d(\vec{a}|\vec{x})=\begin{cases}
                                          \frac{1}{d^n}, \; \text{if} \; a_1+ \ldots + a_n=x_1 \cdot \ldots \cdot x_n\\
                                          0, \; \text{otherwise}
                                         \end{cases}
 \end{align}
 where $d$ is prime and sum $+$ and multiplication $\cdot$ are operations modulo $d$.
\end{definition}

\begin{theorem}\label{sPRnd}
 If $n$ parties are allowed to share an arbitrary number of $PR_n$-$d$ boxes, any $n$-partite communication
 complexity problem can be solved with only $n-1$ dits of communication. 
\end{theorem}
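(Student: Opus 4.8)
The plan is to generalize van Dam's protocol by viewing each $PR_n$-$d$ box as a device that \emph{secret-shares} a product: if party $i$ feeds the box an input $y_i\in\mathbb{Z}_d$, the outputs satisfy $\sum_{i=1}^{n}a_i=\prod_{i=1}^{n}y_i \pmod d$, so the $n$ parties jointly hold an additive sharing of $\prod_i y_i$ without any proper subset learning its value. The idea is then to write the target function $f(\vec{x})\in\mathbb{Z}_d$ as a $\mathbb{Z}_d$-linear combination of products of \emph{local} functions, replace each product by one invocation of a box, and exploit the linearity of additive sharing to reduce the whole computation to a single final round in which the shares are collected at party $1$.

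First I would establish the decomposition lemma: for any finite input sets $\mathcal{X}_1,\dots,\mathcal{X}_n$ and any $f:\mathcal{X}_1\times\cdots\times\mathcal{X}_n\to\mathbb{Z}_d$ there exist coefficients $c_t\in\mathbb{Z}_d$ and local functions $g_i^{(t)}:\mathcal{X}_i\to\mathbb{Z}_d$ with
\begin{align}
 f(\vec{x})=\sum_{t=1}^{T}c_t\prod_{i=1}^{n}g_i^{(t)}(x_i)\pmod d.
\end{align}
The simplest choice takes one term per input tuple $\vec{v}$, with $c_{\vec{v}}=f(\vec{v})$ and $g_i^{(\vec{v})}(x_i)$ the indicator $[x_i=v_i]$; the product is $1$ exactly when $\vec{x}=\vec{v}$ and $0$ otherwise, so the right-hand side collapses to $f(\vec{x})$. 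When $\mathcal{X}_i=\mathbb{Z}_d$ one may equivalently use the multivariate polynomial representation of $f$ over the field $\mathbb{Z}_d$, reducing each variable to degree $\le d-1$ via $x^d=x$. A term depending on only some parties is accommodated by setting $g_i^{(t)}\equiv 1$ for the others, which leaves $\prod_i g_i^{(t)}(x_i)$ unchanged because $1$ is the multiplicative identity; the constant term of $f$ is treated the same way or simply added by party $1$ at the end. Note that every value $g_i^{(t)}(x_i)\in\mathbb{Z}_d$ is a legal box input.

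Given the decomposition, the protocol runs as follows. For each $t$ the parties consume a fresh $PR_n$-$d$ box, party $i$ entering $g_i^{(t)}(x_i)$ and receiving a share $a_i^{(t)}$ with $\sum_i a_i^{(t)}=\prod_i g_i^{(t)}(x_i)$. Everything so far is purely local; the number of boxes used is $T$, which is finite but may be exponentially large, and this is precisely where the hypothesis of an \emph{arbitrary} number of boxes enters. Each party $i$ then forms, again locally, the single dit $s_i=\sum_{t}c_t\,a_i^{(t)}\pmod d$. By distributivity over $\mathbb{Z}_d$,
\begin{align}
 \sum_{i=1}^{n}s_i=\sum_{t}c_t\sum_{i=1}^{n}a_i^{(t)}=\sum_{t}c_t\prod_{i=1}^{n}g_i^{(t)}(x_i)=f(\vec{x}).
\end{align}
It remains only to evaluate this sum: parties $2,\dots,n$ each transmit their single dit $s_i$ to party $1$, who adds them to $s_1$ and outputs $f(\vec{x})$. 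This costs exactly $n-1$ dits, independently of $f$.

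The main obstacle is the decomposition step: one must guarantee that an \emph{arbitrary} function admits an exact product-of-local-functions expansion over $\mathbb{Z}_d$ with legal box inputs, and must check that padding under-supported terms with the constant function $1$ preserves the product structure demanded by the box. The indicator construction settles existence for any finite domain; the primality of $d$, inherited from the $PR_n$-$d$ definition, is what makes the polynomial viewpoint available and places the result within the finite-field framework of the preceding sections, although the elementary decomposition already suffices for the reduction. Once the decomposition and the secret-sharing property of the box are in place, the $n-1$-dit bound follows immediately from collecting the $n$ local shares at a single party.
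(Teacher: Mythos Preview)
Your proposal is correct and follows essentially the same route as the paper: decompose $f$ as a $\mathbb{Z}_d$-linear combination of products of local functions, spend one box per product term to obtain an additive sharing, have each party locally form the weighted sum of its shares, and collect the $n-1$ dits at party~$1$. The paper uses the multivariate polynomial representation over $\mathbb{Z}_d$ (grouping the monomials by party) as its decomposition, which you also mention; your additional indicator-function decomposition is a mild generalization that works for arbitrary finite input alphabets without invoking primality, but the protocol and the $n-1$-dit count are identical.
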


\begin{proof}
Our proof is a straightforward generalization of Ref. \cite{PRd}.
We prove for the case $n=3$. The proof for general $n$ follows directly.

We start by observing that any function 
$F(\vec{x},\vec{y},\vec{z})$, $F: \mathbb{Z}_d^{m_1} \times  \mathbb{Z}_d^{m_2} \times  \mathbb{Z}_d^{m_3}\rightarrow \mathbb{Z}_d$,
can be written as a multivariate polynomial with degree at most $d-1$ in each $x_i, y_j$ and $z_k$
\begin{align}
F(\vec{x},\vec{y},\vec{z})=\sum_{\vec{\alpha},\vec{\beta},\vec{\gamma}}\mu_{\vec{\alpha},\vec{\beta},\vec{\gamma}}\; {\vec{x}}^{\,\vec{\alpha}} \, \vec{y}^{\,\vec{\beta}} \, \vec{z}^{\,\vec{\gamma}} , 
\end{align}
where ${\vec{x}}^{\,\vec{\alpha}}=\prod_{i=1}^{m_1}x_i^{\alpha_i}$, ${\vec{y}}^{\,\vec{\beta}}=\prod_{j=1}^{m_2}y_j^{\beta_j}$, 
${\vec{z}}^{\,\vec{\gamma}}= \prod_{k=1}^{m_3}z_k^{\gamma_k}$, $\mu_{\vec{\alpha},\vec{\beta},\vec{\gamma}} \in \mathbb{Z}_d$, $\vec{\alpha}=(\alpha_1,\ldots,\alpha_{m_1}) \in \mathbb{Z}_d^{m_1}$
and analogously for $\vec{\beta}$ and $\vec{\gamma}$.
Now if the players have access to $r=d^{m_1   m_2  m_3}$ $PR_3$-$d$ boxes they can execute the following protocol in
order to compute $F(\vec{x},\vec{y},\vec{z})$:
\begin{enumerate}
 \item For each $(\vec{\alpha},\vec{\beta},\vec{\gamma})$ the players picks one $PR_3$-$d$,
 \item Alice inputs ${\vec{x}}^{\,\vec{\alpha}}=\prod_{i=1}^{m_1}x_i^{\alpha_i}$. Bob inputs ${\vec{y}}^{\,\vec{\beta}}=\prod_{j=1}^{m_2}y_j^{\beta_j}$. 
 Charlie inputs
 ${\vec{z}}^{\,\vec{\gamma}}=\prod_{k=1}^{m_3}z_k^{\gamma_k}$. And they get respectively the outputs $a_{\vec{\alpha}}$, $b_{\vec{\beta}}$ and $c_{\vec{\gamma}}$.
 \item Bob sets $b=\sum_{\vec{\alpha}}\sum_{\vec{\beta}}\sum_{\vec{\gamma}}\mu_{\vec{\alpha},\vec{\beta},\vec{\gamma}}b_{\vec{\beta}}$ and send $b$ to Alice.
 Charlie sets $c=\sum_{\vec{\alpha}}\sum_{\vec{\beta}}\sum_{\vec{\gamma}}\mu_{\vec{\alpha},\vec{\beta},\vec{\gamma}}c_{\vec{\gamma}}$ and send $c$ to Alice.
  \item Alice sets $a=\sum_{\vec{\alpha}}\sum_{\vec{\beta}}\sum_{\vec{\gamma}}\mu_{\vec{\alpha},\vec{\beta},\vec{\gamma}}a_{\vec{\alpha}}$ and she
 computes $F(\vec{x},\vec{y},\vec{z})=a+b+c$.
\end{enumerate}
where only 2 dits were communicated in order to compute the function.

The generalization for $n$-party function follows in analogous way, where any function $F:\mathbb{Z}_d^{m_1} \times \ldots \times  \mathbb{Z}_d^{m_n}\rightarrow \mathbb{Z}_d$
will be written as a multivariate polynomial with degree at most $d-1$ in each variable and using an analogous protocol, with $n-1$ parties communicating
only one dit to the first party, the computation of function $F$ will be performed.
\end{proof}

The natural generalization of bipartite functional boxes \cite{PRd} to the multipartite case is the following

\begin{definition}
 For any function $f: \mathbb{Z}_d \times \ldots \times \mathbb{Z}_d \rightarrow \mathbb{Z}_d$, the multipartite functional box corresponding 
 to $f$ is defined as
 \begin{align}
  P_n^f(\vec{a}|\vec{x})=\begin{cases}
                                          \frac{1}{d^n}, \; \text{if} \; a_1+ \ldots + a_n= f(x_1,\ldots,x_n)\\
                                          0, \; \text{otherwise}
                                         \end{cases}
 \end{align}
\end{definition}

Now we argue that all $n$-partite functional box with $f$ non-additively separable ($f$ is additively separable if 
$f(x_1,\ldots,x_n)=f_1(x_1)+f_2(x_2)+\ldots + f_n(x_n)$) would lead to some kind of trivialization of communication complexity.

\begin{theorem} 
 All $P_3^f$ with $f(x,y,z)$ such that there exists a partial derivative of some order equals to $\lambda \cdot x\cdot y\cdot z+g(x)+h(y)+s(z)$ can be 
 used to simulate a $PR_3$-$d$, and then can be used to solve any $3$-partite communication
 complexity problem  with only $2$ dits of communication.
\end{theorem}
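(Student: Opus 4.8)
The plan is to prove the statement by showing that arbitrarily many $PR_3\text{-}d$ boxes can be simulated, with no communication, from shared copies of the functional box $P_3^f$; the conclusion then follows at once from Theorem~\ref{sPRnd} with $n=3$, which converts such boxes into a protocol solving any tripartite communication complexity problem with $n-1=2$ dits. I would split the simulation into two stages: first reduce $f$ to the canonical form $\lambda\, x\cdot y\cdot z+g(x)+h(y)+s(z)$ granted by the hypothesis, and then turn a box for such a function into a genuine $PR_3\text{-}d$ box.

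For the first stage I would realize a functional box for any formal partial derivative of $f$ from shifted copies of $P_3^f$. Over the field $\mathbb{F}_d$ ($d$ prime), the formal derivative $\partial_x$ acting on the monomials $x^i$, $0\le i\le d-1$, can be written as an $\mathbb{F}_d$-linear combination of the shift operators $S_t\colon x\mapsto x+t$: requiring the moments $\sum_{t\in\mathbb{F}_d} c_t\, t^m=\delta_{m,1}$ for $0\le m\le d-1$ is a Vandermonde system in the distinct nodes $t\in\mathbb{F}_d$ and hence has a unique (nonzero) solution $\{c_t\}$. Given such coefficients, the parties would use one copy of $P_3^f$ per value of $t$: on copy $t$, Alice inputs $x+t$ and rescales her output by $c_t$, while Bob and Charlie keep their inputs and also rescale by $c_t$. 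Summing each party's outputs over $t$ produces totals with $A+B+C=\sum_t c_t\, f(x+t,y,z)=(\partial_x f)(x,y,z)$; since the copies are independent and some $c_t\neq0$, the pair $(A,B)$ is uniform on $\mathbb{F}_d^2$, so the result is again a bona fide functional box $P_3^{\partial_x f}$. Iterating this (and combining joint shifts in several variables) yields a box $P_3^{\partial f}$ for a partial derivative of any order, all via purely local operations.

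By hypothesis some such derivative equals $\tilde f(x,y,z)=\lambda\, x\cdot y\cdot z+g(x)+h(y)+s(z)$ with $\lambda\neq0$, so the parties would then hold a box $P_3^{\tilde f}$ with outputs uniform subject to $a+b+c=\lambda\, xyz+g(x)+h(y)+s(z)$. To extract a $PR_3\text{-}d$ box on chosen inputs $(x,y,z)$, they feed these inputs into $P_3^{\tilde f}$ and set $a'=\lambda^{-1}(a-g(x))$, $b'=\lambda^{-1}(b-h(y))$, $c'=\lambda^{-1}(c-s(z))$; these are local corrections because $g$, $h$, $s$ each depend on a single party's input and $\lambda^{-1}$ exists in $\mathbb{F}_d$. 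One checks $a'+b'+c'=x\cdot y\cdot z$ with uniform marginals, so $(a',b',c')$ is distributed exactly as $PR_3\text{-}d(\cdot\,|\,x,y,z)$; repeating on fresh copies gives as many $PR_3\text{-}d$ boxes as required.

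The hard part will be the first stage: guaranteeing that a formal partial derivative can be produced by box manipulations at all. The key enabling fact is that over $\mathbb{F}_d$ the derivative operator lies in the span of the $d$ shift operators (the Vandermonde argument), and that the corresponding linear combination of independent functional boxes remains a uniform, communication-free functional box. I would also take care to use the hypothesis with a nonzero trilinear coefficient $\lambda$, so that the final rescaling by $\lambda^{-1}$ is legitimate and the simulated box genuinely realizes the multiplicative relation $x\cdot y\cdot z$ rather than a degenerate separable one.
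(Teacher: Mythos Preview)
Your overall strategy coincides with the paper's: simulate a box for an iterated partial derivative of $f$ from copies of $P_3^f$; once that derivative has the form $\lambda\,xyz+g(x)+h(y)+s(z)$, locally subtract the separable parts and multiply by $\lambda^{-1}$ to obtain a $PR_3\text{-}d$ box; then invoke Theorem~\ref{sPRnd}. Your observation that the resulting marginals are automatically uniform (because the outputs are nonzero linear combinations of independent uniform variables) is in fact cleaner than the paper, which inserts an extra shared-randomness step $(a'+k,\,b'+k,\,c'-2k)$ that your analysis shows to be unnecessary.

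The one point to adjust is your reading of ``partial derivative''. In this paper it means the \emph{finite difference}
\[
f_x(x,y,z):=f(x+1,y,z)-f(x,y,z),
\]
not the formal derivative $\partial_x$. With that reading, stage one is much simpler than your Vandermonde construction: two copies of $P_3^f$ suffice, with Alice shifting her input by $1$ in one copy and each party taking the difference of the two outputs. Your linear-combination-of-shifts mechanism is correct and general enough to cover this (just take $c_1=1$, $c_0=-1$ and all other $c_t=0$), so nothing is actually broken; but as written you are building a box for $\partial_x f$ rather than for the finite difference that the hypothesis guarantees reaches the canonical form, and the $d$-box Vandermonde detour should be replaced by the two-box subtraction the paper uses.
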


\begin{proof}
First let us consider 
\begin{align}\label{sfPR3d}
 f(x,y,z)=\lambda \cdot x\cdot y\cdot z+g(x)+h(y)+s(z).
\end{align}
So by using box $P_3^f$ Alice, Bob and Charlie can input $x,y$ and $z$ respectively
and get outputs $a$, $b$ and $c$. Now following Ref. \cite{PRd} Alice sets $a'=\lambda^{-1}(a-g(x))$, Bob sets $b'=\lambda^{-1}(b-h(y))$ and Charlie sets $c'=\lambda^{-1}(c-s(z))$, 
so that we have
\begin{align}
 a'+b'+c'=x\cdot y \cdot z\,.
\end{align}
In order to randomize the results they can randomly chose $k \in \mathbb{Z}_d$ and output $a_f=a'+k$, $b_f=b'+k$ and $c_f=c'-2k$, so that they perfectly
simulate a $PR_3$-$d$ box.

Now for other functions $f$ we can use the method of Ref. \cite{PRd} of applying partial derivatives to the function. 
The partial derivative of $f$ with respect to $x$ is defined as
\begin{align}
 f_x(x,y,z)\equiv f(x+1,y,z)-f(x,y,z)
\end{align}
and it generates a polynomial with the degree in $x$ reduced by 1, while the degree in $y$ and $z$ remains the same or is smaller. And note that 
with two boxes $P_3^f$ we can simulate the box $ f_x(x,y,z)$.

Then if by partial derivatives we can reduce function $f$ to the form \eqref{sfPR3d} we have a protocol using a finite number of boxes $P_3^f$ to simulate
$PR_3$-$d$. By the result of Theorem \ref{sPRnd}, with an arbitrary finite number of $P_3^f$, we can solve  any $3$-partite communication
 complexity problem with only $2$ dits of communication.
 \end{proof}

If a function $f(x,y,z)$ is not additively separable it will contain at least one term involving product of two variables, for example $x^ry^s$ and this box 
can be reduced, by derivatives, into a box of the form $\lambda \cdot x \cdot y+g(x)+h(y)+s(z)$.
Now using the results for the bipartite case \cite{PRd}, if Charlie always 
inputs $z=1$, with only 2 dits of communication they can compute any function of two variables $f(x,y)$.
 
\section{No quantum realization of functional nonlocal boxes} 

We now show that using our bounds we can exclude the existence of functional boxes that would lead to the trivialization of
communication complexity in a multipartite scenario.

\begin{theorem}\label{stotalfunc}
 For a $n$-player $d$ outcome {\sc{xor}} games $g_n^{\oplus}$ with $m$ questions per player and uniform input distribution
 $p(\vec{x})=1/m^n$, $\omega_q(g_n^{\oplus})=1$ iff $\omega_c(g_n^{\oplus})=1$.
\end{theorem}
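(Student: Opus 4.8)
The plan is to prove the nontrivial implication $\omega_q(g_n^{\oplus})=1 \Rightarrow \omega_c(g_n^{\oplus})=1$, since the converse is immediate (every deterministic classical strategy is a quantum strategy). The engine is Theorem~\ref{sthmnormn}, which for the uniform \xor-$d$ game ($G=\mathbb{Z}_d$, $Q_i=m$, $p(\vec{x})=1/m^n$) supplies, for \emph{every} single-party partition $S=\DE{A_i}$, the valid bound $\omega_q \leq \frac{1}{d}\de{1+\sqrt{m^n}\sum_{k=1}^{d-1}\norm{\Phi_k^{S}}}$.

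First I would record an entrywise estimate of each game matrix. For $S=\DE{A_i}$ the matrix $\Phi_k^{S}$ is $m\times m^{n-1}$ with every entry of modulus $1/m^n$ (characters have unit modulus), so its Frobenius norm is $\|\Phi_k^{S}\|_F=(m^n\cdot m^{-2n})^{1/2}=m^{-n/2}$, and hence $\norm{\Phi_k^{S}}\le\|\Phi_k^{S}\|_F=1/\sqrt{m^n}$. Substituting $\omega_q=1$ into the bound above forces $\sum_{k=1}^{d-1}\norm{\Phi_k^{S}}\ge (d-1)/\sqrt{m^n}$; combined with the termwise upper bound, this squeezes every summand to equality, $\norm{\Phi_k^{S}}=1/\sqrt{m^n}$ for all $k\neq e$. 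Since equality between the operator norm and the Frobenius norm holds only for (nonzero) matrices of rank one, each $\Phi_k^{S}$ is rank one.

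Next I would extract the structure of $f$ from the rank-one condition for the faithful character $\chi_1(a)=\zeta^a$. Writing $\Phi_1^{S}=\sigma\ket{u}\bra{v}$ and using that all entries have equal modulus, the phase factorizes as $\zeta^{f(x_i,\vec{x}_{-i})}=p_{x_i}\,q_{\vec{x}_{-i}}$ for unit-modulus $p,q$. Taking a ratio against a fixed reference value $\vec{x}_{-i}^{\,0}$ shows that $f(x_i,\vec{x}_{-i})-f(x_i,\vec{x}_{-i}^{\,0})\bmod d$ is independent of $x_i$, i.e. $f(\vec{x})=\alpha_i(x_i)+\beta_i(\vec{x}_{-i}) \pmod d$. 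Running this for each $i=1,\ldots,n$ shows that $f$ separates additively across every single coordinate.

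Finally I would promote these single-coordinate separations to full additive separability. The separation in coordinate $i$ says the first difference $\Delta_i f$ is independent of the remaining inputs, so every mixed second difference $\Delta_i\Delta_j f$ ($i\neq j$) vanishes; this forces $f(\vec{x})=\sum_{i=1}^{n}g_i(x_i)\pmod d$ for suitable functions $g_i$. The deterministic strategy in which player $i$ answers $g_i(x_i)$ then satisfies $a_1\oplus_d\cdots\oplus_d a_n=f(\vec{x})$ on every input, giving $\omega_c=1$. I expect the main obstacle to be this last step — carefully arguing that vanishing mixed second differences over $\mathbb{Z}_d$ yield a genuine additive decomposition — together with the clean handling of the rank-one equality condition, whereas the norm estimates themselves are routine.
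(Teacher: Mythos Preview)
Your proof is correct and follows essentially the same route as the paper: forcing saturation of the norm bound for every single-party cut $S=\{A_i\}$, deducing that each $\Phi_1^{S}$ has rank one, extracting $f(\vec{x})=\alpha_i(x_i)+\beta_i(\vec{x}_{-i})$ for every $i$, and reading off a perfect deterministic strategy. The paper gets the norm estimate by bounding the entries of $\Phi_k\Phi_k^{\dagger}$ and obtains rank one from an explicit eigenvector analysis rather than your slicker Frobenius-norm equality criterion; as for the final step you flag as the main obstacle, the paper dispatches it by a direct telescope $f(\vec{x})=(\alpha_1(x_1)-\alpha_1(0))+f(0,x_2,\ldots,x_n)=\cdots=\sum_i(\alpha_i(x_i)-\alpha_i(0))+f(0,\ldots,0)$, which you can adopt verbatim in place of the second-difference argument.
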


\begin{widetext}
\begin{proof} 
We start by proving the result for 3 players. We first chose the partition $S=\DE{A}$ to write the game matrices.

The constraint that the input distribution is $p(x,y,z)=1/m^3$ for all $x,y,z$ implies $\norm{\Phi^A_k}\leq 1/\sqrt{m^3}$ since $\Phi_k^A{\Phi_k^{A}}^{\dagger}$ is an
$m \times m$ matrix with the absolute value of all elements equal to $1/m^4$ (and then $\norm{\Phi^A_k{\Phi_k^{A}}^{\dagger}} \leq 1/m^3$).
Now considering the bound
\begin{align}
  \omega_q(g_n^{\oplus})\leq \frac{1}{d}\De{1+\sqrt{m^3}\sum_{k=1}^{d-1}\norm{\Phi_k^A}}
\end{align}
we see that  $\omega_q(g^{\oplus})=1$ requires $\norm{\Phi_k^A}= 1/\sqrt{m^3}$ for all $k$.

Let $\ket{\lambda^A}=(\lambda^A_1,\ldots,\lambda^A_m)$ be the maximum eigenvector corresponding to eigenvalue $1/{m^3}$ of $\Phi_1^A{\Phi_1^A}^{\dagger}$.
Consider $ \lambda^A_j=|\lambda^A_j|\zeta^{\theta^A_j}$ and assume $|\lambda^A_1| \geq |\lambda^A_2| \geq \ldots \geq |\lambda^A_m|$.

Now we have
\begin{align}
 \Phi^A_1=\sum_{x,y,z}\frac{1}{m^3} \zeta^{f(x,y,z)} \ketbra{x}{yz}
\end{align}
and then
\begin{align}
\Phi^A_1 {\Phi_1^A}^{\dagger} =& \frac{1}{m^6} \sum_{x,y,z} \sum_{x',y',z'}\zeta^{f(x,y,z)-f(x',y',z')}\ket{x}\braket{yz}{y'z'}\bra{x'}\\
 =& \frac{1}{m^6} \sum_{x,x',y,z}\zeta^{f(x,y,z)-f(x',y,z)}\ketbra{x}{x'} \nonumber
\end{align}
and
\begin{align}
\Phi^A_1 {\Phi^A}_1^{\dagger}\ket{\lambda} =&  \frac{1}{m^6} \sum_{x,x',y,z,i}\zeta^{f(x,y,z)-f(x',y,z)+\theta^A_i}|\lambda^A_i|\ket{x}\braket{x'}{i}\\
 =& \frac{1}{m^6} \sum_{x,y,z,i}\zeta^{f(x,y,z)-f(i,y,z)+\theta^A_i}|\lambda^A_i|\ket{x} \nonumber.
\end{align}
Analyzing the first component of the eigenvalue equation ${\Phi^A}_1{\Phi^A}_1^{\dagger}\ket{\lambda^A}=1/m^3 \ket{\lambda^A}$ we have
\begin{align}
 \De{\Phi^A_1{\Phi^A}_1^{\dagger}\ket{\lambda^A}}_1 =& \frac{1}{m^6} \sum_{y,z,i}\zeta^{f(1,y,z)-f(i,y,z)+\theta^A_i}|\lambda^A_i|=\frac{1}{m^3} |\lambda^A_1| \zeta^{\theta^A_1}.
\end{align}
In order to satisfy this equation we need to have 
\begin{subequations}
\begin{align}
 |\lambda^A_i|&=|\lambda^A_1| \; \; \forall \;\;i\\
 &\text{and} \nonumber \\ 
\zeta^{f(1,y,z)-f(i,y,z)+\theta^A_i}&= \zeta^{\theta^A_1}  \; \; \forall \;\;i, y, z.
\end{align}
\end{subequations}
The equations for the other components of the eigenvalue equation imply: 
\begin{subequations}
\begin{align}\label{sthetaA}
 f(x,y,z)-f(x',y,z)=\theta^A_{x}-\theta^A_{x'} \; \forall\; y,z,
\end{align}
where the operations are modulo $d$.

We can do the same argument for the 
other partitions $S=\DE{B}$ and $S=\DE{C}$, and the hypothesis of $\omega_q(g_n^{\oplus})=1$ implies by the 
same arguments above that $\rank (\Phi^S_1)=1$ for all $S$ and then we have the relations:
\begin{align}
f(x,y,z)-f(x,y',z)=\theta^B_{y}-\theta^B_{y'}\; \forall\; x,z\label{sthetaB}\\
f(x,y,z)-f(x,y,z')=\theta^C_{z}-\theta^C_{z'}\; \forall\; x,y\label{sthetaC}.
\end{align}
\end{subequations}

By relations \eqref{sthetaA}, \eqref{sthetaB} and \eqref{sthetaC} we can deduce that
\begin{align}
 f(x,y,z)&=(\theta^A_{x}-\theta^A_{0})+f(0,y,z)\nonumber\\
 &=(\theta^A_{x}-\theta^A_{0})+(\theta^B_{y}-\theta^B_{0})+f(0,0,z)\\
 &=(\theta^A_{x}-\theta^A_{0})+(\theta^B_{y}-\theta^B_{0})+(\theta^C_{z}-\theta^C_{0})+f(0,0,0)\nonumber
\end{align}
and then consider $a_0, b_0, c_0$ such that $a_0\oplus_d b_0 \oplus_d c_0=f(0,0,0)$, the classical strategy
\begin{align}
 a&=a_0+(\theta^A_{x}-\theta^A_{0}) \nonumber\\
  b&=b_0+(\theta^B_{y}-\theta^B_{0})\\
   c&=c_0+(\theta^C_{z}-\theta^C_{0})\nonumber
\end{align}
win the game with probability 1.\vspace{1em}

The proof for a $n$-player game follows in the same way.
Considering the partition $S=\DE{A_1}$.
The constraint of equally distributed inputs implies $\Phi^S_k{\Phi^S}_k^{\dagger}$ is an
$m \times m$ matrix with the absolute value of all elements equal to $1/m^{n+1}$ (and then $\norm{\Phi_k\Phi_k^{\dagger}} \leq 1/m^n$).
Now considering the bound
\begin{align}
  \omega_q \leq \frac{1}{d}\De{1+\sqrt{m^n}\sum_{k=1}^{d-1}\norm{\Phi^S_k}}
\end{align}
we see that  $\omega_q(g^{\oplus})=1$ requires $\norm{\Phi^S_k}= 1/\sqrt{m^n}$ for all $k$.

Analogously for the 3-player game, we conclude that in order to satisfy $\norm{\Phi^S_k}= 1/\sqrt{m^n}$ all the rows of the 
game matrix has to be proportional to each other and then
\begin{align}\label{sthetaA1}
 f(x_1,x_2,\ldots,x_n)-f({x'}_1,x_2,\ldots,x_n)=\theta^{A_1}_{x_1}-\theta^{A_1}_{{x'}_1} \;\; \forall\;\; x_2,\ldots x_n.
\end{align}
Running the analysis over the partitions $S={A_i}$ for $i=\DE{2,\ldots,n}$ we can specify a classical strategy that wins the 
game with probability 1. 
\end{proof}
\end{widetext}

\section{Device independent witnesses of genuine tripartite entanglement: Explicit example and numerical results}

Based on the Mermin inequality \cite{Mermin} for the GHZ-paradox, we have considered the game:
\begin{align}\label{sgameghz}
a \oplus_3 b \oplus_3 c=x\cdot y \cdot z \;\;\; \text{s.t.}\;\;\; x\oplus_3 y \oplus_3 z=0
\end{align}
where $x, y, z, a,b,c \in \DE{0,1,2}$, and  $\oplus_3, \cdot$ are sum and multiplication modulo 3.

Bellow we present the projective measurements  $\{M_x^a\}=\DE{ \ketbra{A_x^a}{A_x^a}}$, $\{M_y^b\}=\DE{\ketbra{B_y^b}{B_y^b}}$, 
$\{M_z^c\}=\DE{ \ketbra{C_z^c}{C_z^c}}$ 
that allow the players to win game \eqref{sgameghz} with probability 1 when they share the $GHZ_3$ state:

 \begin{widetext}
 \begin{subequations}
 \label{sproj}
\begin{align}
 \ket{A_0^0}=\ket{B_0^0}=\frac{1}{\sqrt{3}}\de{\,1\,,\zeta^{4/3},\,1\,}\;,\; \ket{A_0^1}=\ket{B_0^1}=\frac{1}{\sqrt{3}}\de{\zeta^{2},\zeta^{7/3},\,1\,} \;,\; \ket{A_0^2}=\ket{B_0^2}=\frac{1}{\sqrt{3}}\de{\zeta,\zeta^{1/3},\,1\,}\label{A0}\\
\ket{A_1^0}=\ket{B_1^0}=\frac{1}{\sqrt{3}}\de{\zeta^{1/3},\,1\,,\,1\,}\;,\; \ket{A_1^1}=\ket{B_1^1}=\frac{1}{\sqrt{3}}\de{\zeta^{7/3},\zeta,\,1\,} \;,\; \ket{A_1^2}=\ket{B_1^2}=\frac{1}{\sqrt{3}}\de{\zeta^{4/3},\zeta^{2},\,1\,}\label{A1}\\
\ket{A_2^0}=\ket{B_2^0}=\frac{1}{\sqrt{3}}\de{\zeta^{8/3},\zeta^{8/3},1}\;,\; \ket{A_2^1}=\ket{B_2^1}=\frac{1}{\sqrt{3}}\de{\zeta^{5/3},\zeta^{2/3},1} \;,\; \ket{A_2^2}=\ket{B_2^2}=\frac{1}{\sqrt{3}}\de{\zeta^{2/3},\zeta^{5/3},1}\label{A2}
 \end{align}

 \begin{align}
  \ket{C_0^0}=\frac{1}{\sqrt{3}}\de{\,1\,,\zeta^{1/3},\,1\,}\;,\; \ket{C_0^1}=\frac{1}{\sqrt{3}}\de{\zeta^{2},\zeta^{4/3},\,1\,} \;,\; \ket{C_0^2}=\frac{1}{\sqrt{3}}\de{\zeta,\zeta^{7/3},\,1\,}\label{C0}\\
\ket{C_1^0}=\frac{1}{\sqrt{3}}\de{\zeta^{1/3},\zeta^2,,\,1\,}\;,\; \ket{C_1^1}=\frac{1}{\sqrt{3}}\de{\zeta^{7/3},\,1\,,\,1\,} \;,\; \ket{C_1^2}=\frac{1}{\sqrt{3}}\de{\zeta^{4/3},\zeta,\,1\,}\label{C1}\\
\ket{C_2^0}=\frac{1}{\sqrt{3}}\de{\zeta^{8/3},\zeta^{5/3},1}\;,\; \ket{C_2^1}=\frac{1}{\sqrt{3}}\de{\zeta^{5/3},\zeta^{8/3},1} \;,\; \ket{C_2^2}=\frac{1}{\sqrt{3}}\de{\zeta^{2/3},\zeta^{2/3},1}\label{C2}
 \end{align}
where $\zeta=e^{2\pi i/3}$.
\end{subequations}
 \end{widetext}

 Note that the ``observable'' defined as $ A_x^i=\sum_a\bar{\chi}_i(a)M_x^a,$ has trace zero for the projective measurements specified above, and 
 analogously for $B_y^j$ and $C_z^k$, then we have that 
 \begin{align}
  \Tr\de{I(A_x^i\otimes B_y^j \otimes C_z^k)}=0.
 \end{align}

 Hence we can easily calculate the success probability for the noisy $GHZ_3$ state when the players perform the local measurements given in \eqref{sproj}:
 
 \begin{widetext}
\begin{align}
  \omega(\rho(V))& = \frac{1}{3}\de{1+\sum_{x,y,z}\sum_{k=1}^2 {p(x,y,z)} \zeta^{k\cdot f(x,y,z)}\Tr \de{\rho(V)(A_x^k\otimes B_y^k\otimes C_z^k)}}\nonumber\\
		&= \frac{1}{3}\de{1+\sum_{x,y,z}\sum_{k=1}^2 {p(x,y,z)} \zeta^{k\cdot f(x,y,z)}V \Tr \de{\ketbra{GHZ_3}{GHZ_3}(A_x^k\otimes B_y^k\otimes C_z^k)}}\\
		&=\frac{1-V}{3}+V\omega(GHZ_3)\nonumber\\
		&=\frac{1+2V}{3} \nonumber.
 \end{align}

\end{widetext}

\end{document}